%% file: main.tex
\documentclass[acmsmall, screen, nonacm]{acmart}

\newif\ifarxiv
\arxivtrue

\input{style}

\begin{document}

\title{Synthesizing Backward Error Bounds, Backward}
\ifarxiv\else
    \titlenote{Proofs and additional details may be found in the supplemental materials \cite{supplement}.}
\fi

\author{Laura Zielinski}
\orcid{0009-0006-6516-1392}
\affiliation{
  \institution{Cornell University}
  \city{Ithaca}
  \country{USA}
}
\email{lzielinski@cs.cornell.edu}

\author{Justin Hsu}
\orcid{0000-0002-8953-7060}
\affiliation{
  \institution{Cornell University}
  \city{Ithaca}
  \country{USA}
}
\email{justin@cs.cornell.edu}

\begin{abstract}
    Backward stability is a desirable property for a well-designed numerical algorithm: given an input, a backward stable floating-point program produces the exact output for a nearby input. While automated tools for bounding the forward error of a numerical program are well-established, few existing tools target backward error analysis. We present a formal framework that enables sound, automated backward error analysis for a broad class of numerical programs. First, we propose a novel generalization of the definition of backward stability that is both compositional and flexible, satisfied by a wide range of floating-point operations. Second, based on this generalization, we develop the category $\Shel$ where morphisms model stable numerical programs, and show that structures in $\Shel$ support a rich variety of backward error analyses. Third, we implement a tool, $\eggshel$, that automatically searches within a syntactic subcategory of $\Shel$ to prove backward stability for a given program. Our algorithm handles many programs with variable reuse, a known challenge in backward error analysis. We prove soundness of our algorithm and use our tool to synthesize backward error bounds for a suite of programs that were previously beyond the reach of automated analysis.
\end{abstract}




\maketitle

\section{Introduction}
\input{sections/intro}

\section{Background}
\input{sections/background}

\section{Relational Backward Error Lenses}\label{sec:lenses}
\input{sections/lenses}

\section{Shel: The Category of Shifted Error Lenses}\label{sec:category}
\input{sections/category}

\section{Case Studies}\label{sec:case_studies}
\input{sections/case_studies}

\section{eggshel Implementation}\label{sec:implementation}
\input{sections/implementation}

\section{Related Work}
\input{sections/related}

\section{Conclusion and Future Directions}
\input{sections/conclusion}

\begin{acks}
  We thank the anonymous reviewers for their helpful comments. This project was inspired by discussions at the Adjoint School 2025 in-person research week. We are grateful to the Adjoint School organizers for the opportunity, and we thank Jonas Forster, Chase Ford, Max Fan, J\'{o}n H\'{a}kon Gar{\dh}arsson, Jessica Richards, Aven Dauz, and the other members of the research group for productive discussions. This work was partially supported by a grant from the Office of Naval Research (N00014-23-1-2134).
\end{acks}

\section*{Artifact}
The implementation $\eggshel$ described in \zcref{sec:implementation} is publicly available \cite{eggshel}.

\makeatletter
\let\@vspace\@vspace@orig
\makeatother

\bibliographystyle{ACM-Reference-Format}
\bibliography{header, references}

\newpage\appendix\input{sections/appendix}

\end{document}

%% file: style.tex
\usepackage{quiver}
\usepackage{listings}
\usepackage{mathpartir}
\usepackage{multirow}
\usepackage{booktabs}
\usepackage{amsthm}
\usepackage{thm-restate}
\usepackage{zref-clever}
\zcsetup{cap}

\lstdefinelanguage{egglog}{
    basicstyle=\ttfamily\small,
    keywordstyle=\textbf,
    keywords={datatype, Expr, String, f64, Ctx, rewrite, relation},
    morecomment=[l]{;},
    commentstyle=\color{gray},
    showstringspaces=false,
    xleftmargin=2em
}
\newcommand{\Bean}{\textsc{Bean}}
\newcommand{\eggshel}{\textbf{eggshel}}

\newcommand{\Shel}{\textbf{Shel}}
\newcommand{\egglog}{\texttt{egglog}}

\newcommand{\tf}{\tilde{f}}
\newcommand{\tg}{\tilde{g}}
\newcommand{\tx}{\tilde{x}}
\newcommand{\ty}{\tilde{y}}

\newcommand{\R}{\mathbb{R}}
\newcommand{\F}{\mathbb{F}}
\newcommand{\N}{\mathbb{N}}
\newcommand{\e}{\varepsilon}
\newcommand{\RR}{\mathcal{R}}

\newcommand{\op}{\mathbf{op}}
\newcommand{\rnd}{\mathbf{rnd}}
\newcommand{\sqr}{\mathbf{sqrt}}

\newcommand{\add}{\mathbf{add}}
\newcommand{\sub}{\mathbf{sub}}
\newcommand{\mul}{\mathbf{mul}}
\newcommand{\dmul}{\mathbf{dmul}}
\newcommand{\divi}{\mathbf{div}}

\newcommand{\share}{\mathbf{share}}
\newcommand{\push}{\mathbf{push}}

\newcommand{\denot}[1]{\llbracket{#1}\rrbracket}
\newcommand{\fixedarrow}[1]{\xrightarrow{\makebox[50pt][c]{$\scriptstyle #1$}}}

\DeclareMathOperator{\sign}{sign}
\DeclareMathOperator{\id}{id}
\DeclareMathOperator{\Var}{Var}
\DeclareMathOperator{\FV}{FV}
\DeclareMathOperator{\dom}{dom}

\theoremstyle{plain}
\newtheorem{theorem}{Theorem}[section]
\newtheorem{definition}[theorem]{Definition}
\newtheorem{lemma}[theorem]{Lemma}
\newtheorem{remark}[theorem]{Remark}

\theoremstyle{definition}
\newtheorem{example}[theorem]{Example}

%% file: sections/intro.tex
Numerical computation requires approximating real numbers with floating-point values, inevitably introducing rounding error at each operation. One of the standard ways to check if a numerical algorithm is well-designed is by proving \emph{backward stability}, which states that a program's output is the mathematically exact output corresponding to a nearby input; the distance between the actual and the nearby input is called the \emph{backward error}. Backward error analysis provides a way to measure the reliability of an algorithm while taking into account the natural tendency of the problem to amplify errors \cite[p.~9]{higham}. It is a dual notion to forward error, which is the difference between the computed and the ideal output. 

While there are now many automated tools for bounding forward error (e.g., \cite{fptaylor, precisa, gappa}), few have targeted backward error. A central challenge is that backward stability of subprograms does not guarantee backward stability of a full program. In particular, reusing variables across multiple computations often breaks backward stability. Classical analyses tend to use ad hoc reasoning \cite{higham} which is difficult to automate. A notable exception is $\Bean$ \cite{bean}, a programming language whose type system ensures that any well-typed program is backward stable. $\Bean$ demonstrated that a static and sound backward error analysis is feasible, but its affine type system severely limits the programs it can analyze and it cannot type many backward stable programs that reuse variables.

In this work, we present a novel automated backward error analysis that can handle a significantly broader class of programs and proofs of backward stability. Central to our analysis, we introduce a new, stronger version of backward stability that behaves well under composition while capturing a broad variety of stable floating-point operations. We frame our definition within a novel category $\Shel$, where morphisms are backward stable programs, and demonstrate how backward stable programs that reuse variables are naturally accommodated in $\Shel$. Guided by this category, we develop a tool called $\eggshel$ that searches within $\Shel$ to prove a given program's backward stability, automatically deriving backward error bounds. 

\paragraph{Contribution 1: Relational Backward Error Lenses} In \zcref{sec:lenses}, we present a stronger definition of backward stability called a \emph{relational} backward error lens, which serves as the foundation of our analysis. Like the original backward error lenses proposed by \citet{bean}, our definition is designed to compose well and carry a proof of backward stability. However, our definition is more general and captures new stable operations, such as correctly-rounded square root and logarithm.

\paragraph{Contribution 2: Categorical Semantics} In \zcref{sec:category}, we develop a category $\Shel$ where morphisms are relational backward error lenses, and objects give us fine-grained information about how inputs are perturbed in the course of backward error analysis. For instance, we introduce the \emph{push product} to model variables with correlated backward error, which is crucial for capturing backward stable programs that reuse variables. In \zcref{sec:case_studies}, we use these novel categorical structures to form lenses corresponding to concrete examples of backward stable programs. We also show how $\Shel$ can support multiple analyses of the same program, where different backward error bounds can be established through different choices of lenses.

\paragraph{Contribution 3: Implementation} In \zcref{sec:implementation}, we develop an automated analysis called $\eggshel$ that can search for backward error lenses proving backward stability for a given target program. Our tool is implemented in $\egglog$~\cite{egglog} and relies on a concrete, syntactic presentation of a subcategory of $\Shel$. We prove soundness of our tool and show that it can automatically derive backward error bounds for programs that are not supported by existing tools for backward error analysis.

%% file: sections/background.tex
In this section, we introduce our model of rounding error, give the standard definition of backward stability, and present an example of a backward stability analysis.

\subsection{Floating-Point Error}
The floating-point numbers $\F$ form a finite subset of the real numbers exactly representable on a computer. Compared to ideal mathematical functions, operations on floating-point numbers inherently introduce rounding error, and an \emph{error model} is a mathematical specification describing the magnitude of these errors. We use the model proposed by \citet{olver}, which is parametrized by the \emph{unit roundoff} $u$, a constant that depends on the rounding mode and floating-point precision. For example, under round-to-nearest with double precision (binary64), $u=2^{-53}$. We assume a fixed rounding mode and precision and give our bounds in terms of the constant $\e=u/(1-u)$.

\begin{definition}[Olver's Rounding Error Model]\label{def:olver_model}
    For floating-point numbers $x,y\in \F$ and a primitive floating-point operation $\op$ modeling an exact $op\in\{+,-,\times,\div\}$, assuming no underflow or overflow, 
    \begin{equation*}
        \op~x~y =(x~op~y)e^\delta\quad\text{with}\quad |\delta|\leq \e =\frac{u}{1-u},
    \end{equation*}
    where $u$ is the unit roundoff. We also assume $\sqr~x=\sqrt{x}e^\delta$ where $|\delta|\leq \e$.
\end{definition}

In other words, the primitive floating-point operations $\add$, $\sub$, $\mul$, $\divi$, and $\sqr$ are \emph{correctly-rounded}. This error model is satisfied by machines complying with the IEEE-754 standard when using a supported rounding mode such as round-to-nearest \cite{mech_olver}; we use Olver's model as it simplifies the calculation of error bounds \cite{olver}. Intuitively, for small $\delta$ the factor $e^\delta$ is very close to $(1 + \delta)$, so a primitive floating-point operation returns the exact result perturbed by a small factor. 

Olver's rounding error model can be formulated in terms of an extended metric on $\R$ \cite{olver}:

\begin{definition}[Relative Precision Metric]\label{def:rp_metric}
    For $x,y\in\R$, we define $RP(x,y)$ as
    \begin{equation*}
        RP(x,y) = \begin{cases}
            0 & x=y=0 \\
            |\ln(x/y)| & \sign(x)=\sign(y) \\
            \infty & \text{else.}
        \end{cases}
    \end{equation*}
\end{definition}

For example, the $RP$ distance between nonzero $x\in\R$ and $\rnd~x=xe^\delta$ is 
\begin{align*}
    RP(x,xe^\delta)&=\left|\ln\frac{x}{xe^\delta}\right| = |\ln(x)-(\ln(x) + \ln(e^\delta))| = |\delta|\leq \e.
\end{align*}
This metric is useful as it offers a notion of relative distance; note that the usual definition of relative error, $|x-y|/|x|$, does not form a true metric, though one can bound the relative error using the $RP$ distance \cite{olver}. We will use the $RP$ metric to measure distance in $\R$ throughout this paper. 

\begin{remark}
    As a convenience, for $x,y\in\R^n$ and a componentwise bound $c\in\R^n_{\geq 0}$, we write $d(x,y)\leq c$ to mean $RP(x_i,y_i)\leq c_i$ for all $i\in\{1,\dots,n\}$.
\end{remark}

\subsection{Backward Stability and Backward Error}
With our rounding error model fixed, we turn to a definition of backward stability \cite[p.~122]{higham}. In this section and the next, $f$ denotes a real function $X\subseteq\R^n\to \R^m$. Floating-point implementations of $f$ are denoted $\tf:X\cap\F^n\to \F^m$. 

\begin{definition}[Componentwise Backward Stability, Standard Definition]\label{def:std_def}
    We say $\tf$ is $\alpha$-\emph{backward stable} for some $\alpha\in\R_{\geq 0}^n$ if, for any floating-point input $x$, there exists a \emph{witness} $\tx\in\R^n$ such that 
    \begin{equation*}
        f(\tx)=\tf(x) \quad \text{and} \quad d(x,\tx)\leq \alpha\e.
    \end{equation*}
    For each $i\in\{1,\dots,n\}$, the constant $\alpha_i\e$ is the \emph{backward error bound} of $\tf$ with respect to input $x_i$.
\end{definition}

If we think of the input $x$ as describing parameters of some problem to be solved, backward stability states that a program $\tf(x)$ computes the exact solution to nearly the right problem: $f(\tx)$. Backward error stands in contrast to \emph{forward} error, which is simply the distance between the computed and ideal outputs. While its definition may appear somewhat unnatural, backward error is often preferred by numerical analysts for several reasons \cite[pp.~ix--x]{corless}. First, forward error depends on the \emph{sensitivity} of the function being computed, which bounds how changes to the input affect the output. Highly sensitive functions---such as subtraction---may have no implementation with small forward error, since even small rounding errors are greatly amplified. Thus, a large forward error may be because the \emph{problem} is inherently difficult to solve accurately, or because the \emph{program} is introducing excessive rounding errors \cite[p.~9]{higham}. 

Backward stability provides a way to analyze the quality of a program, independent of the sensitivity of the underlying problem. A program with a small backward error is considered to be as accurate as can be expected, even if its forward error is large. For example, floating-point subtraction is backward stable, even though it is not forward stable. Furthermore, backward error can be usefully compared to other natural sources of error, such as the measurement uncertainty of the input data \cite[pp.~6--7]{higham}.

The primitive floating-point operations $\add,\sub,\mul,\divi,\sqr$ all immediately satisfy backward stability by \zcref{def:olver_model}. Take $\mul$ for example:
\begin{align*}
    \tf(x,y) &= \mul~x~y = xye^\delta = (xe^{\delta/2})(ye^{\delta/2}) = f(\tx,\ty)
\end{align*}
where $|\delta|\leq\e$ and we set $\tx=xe^{\delta/2}$ and $\ty=ye^{\delta/2}$. Since $RP(x,\tx)=RP(y,\ty)\leq \e/2$, we conclude that $\mul$ satisfies \zcref{def:std_def} and is $(1/2,1/2)$-backward stable.

This argument is a simple example of a \emph{backward error analysis}.  Intuitively, we think of performing a floating-point operation as ``pushing'' some amount of error onto its operands, or ``perturbing'' them somehow. For example, $\mul$ in this analysis pushes $\delta/2$ error onto each operand. If the result of $\mul~x~y$ is used in another operation and has error pushed onto it, that error gets distributed down to $x$ and $y$, perturbing them further. For a worked example of a larger analysis, see \ifarxiv\zcref{app:analysis_by_hand}.\else\cite{supplement}.\fi

For $\mul$ and $\divi$ in particular, there are multiple possible backward error bounds. For example, it is equally valid to say that
\begin{align}\label{eq:dmul}
    \mul~x~y = (xe^{\delta})y = x(ye^{\delta}).
\end{align}
That is, we can distribute the backward error any way we wish, including setting $\tx=x$ (pushing all the error onto $y$ and leaving $x$ unperturbed) or $\ty=y$ (pushing all the error onto $x$ and leaving $y$ unperturbed). In fact, both $\mul$ and $\divi$ are $(p,1-p)$-backward stable for any $p\in[0,1]$. This flexibility significantly complicates backward error analysis---while the choice of how to distribute error is usually not important, for some programs the backward stability argument requires a specific allocation of error.

\subsection{A Motivating Example}
Proving backward stability by hand grows more difficult for larger programs, and it is natural to look for a more compositional method. For example, the backward error analysis for back substitution (solving $Uy=b$ for upper triangular matrix $U$) relies on the analysis of dot product. Compositional analysis offers a structured, algorithmic way to determine backward error bounds, and this is the guiding principle of our analysis.

\citet{bean} designed the first tool exemplifying this approach: a programming language, $\Bean$, whose type system only allows stable programs. However, $\Bean$ is overly conservative and rejects many valid programs. We consider a motivating example of a backward stable program for which the analysis in $\Bean$ fails due to restrictions in the type system and semantics, but for which our analysis succeeds. Our program takes the Euclidean norm of a vector in $\R^2$:
\begin{align*}
    \tf(a,b) &= \sqr~(\add~(\mul~a~a)~(\mul~b~b)).
\end{align*}
While it is not at all obvious, this program is backward stable. However, this program fails to type check in $\Bean$ for two reasons. First, square root, though primitive and backward stable, is not a valid $\Bean$ operation. This is because square root does not satisfy their definition of composable backward stability, which is more restrictive than the standard definition of backward stability (\zcref{def:std_def}). In our work, we find an alternative definition which is satisfied by a wider variety of stable operations, such as square root.

Second, $\Bean$ is an \emph{affine} type system which does not allow variable reuse, except for variables which are never perturbed. Thus, the stable subprograms $\mul~a~a$ and $\mul~b~b$ in our example are disallowed since $a$ and $b$ are both reused and perturbed. To understand the reason for the affine restriction, suppose we have two stable programs $\tf_1$ and $\tf_2$, each with a floating-point input $x$. By backward stability, we know that $\tf_1(x)=f_1(\tx_1)$ and $\tf_2(x)=f_2(\tx_2)$ for some $\tx_1,\tx_2\in\R$.  Now, consider the paired program $(\tf_1,\tf_2)$ that takes a single input $x$ and returns both outputs as a tuple. To satisfy backward stability, we need to find $\tx$ such that 
\begin{equation*}
    (\tf_1(x),\tf_2(x))=(f_1(\tx),f_2(\tx)),
\end{equation*}
but if $\tx_1\neq \tx_2$, there may not exist such an $\tx$. This partially explains why programs like outer product and standard matrix-matrix multiplication are not backward stable---they reuse variables many times, each computation perturbing the inputs in potentially different directions. However, there are backward stable programs that \emph{do} reuse variables and perturb them. If both $\tf_1$ and $\tf_2$ have the same witness $\tx$, then the program $(\tf_1,\tf_2)$ is indeed backward stable. Our analysis captures these programs, which are valid when all subprograms push the \emph{same} error onto a common variable.

%% file: sections/lenses.tex
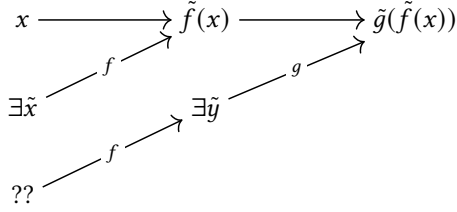
\begin{figure}
  \centering
  \input{figures/compose.tex}
  \caption{The standard definition of backward stability fails to compose.}
  \Description{A diagram showing how the standard definition of backward stability fails to compose.}
  \label{fig:fails_to_compose}
\end{figure}

Our broad goal is to prove backward error guarantees compositionally.  Concretely, starting from the primitive backward stable floating-point operations, we want to leverage the stability of subprograms to prove the stability of a large program. However, the standard definition of backward stability is not preserved under composition. To see why, suppose we are given two backward stable programs $\tf$ and $\tg$, and we would like to show that the program $\tg\circ\tf$ modeling the exact function $g\circ f$ is backward stable. For an input $x$, we require an $\tx$ such that
\begin{equation*}
    g(f(\tx))=\tg(\tf(x)).
\end{equation*}
Applying the backward stability property of $\tg$ at input $\tf(x)$ gives a perturbed input $\ty$, as depicted in \zcref{fig:fails_to_compose}. But now we are stuck---the backward stability guarantee for $\tf$ only provides a witness $\tx$ satisfying $f(\tx)=\tf(x)$, not $f(\tx)=\ty$ for an arbitrary value $\ty$. Pictorially, backward stability of $\tf$ shows that there is an input $\tx$ at the middle-left in \zcref{fig:fails_to_compose}, but establishing backward stability of $\tg \circ \tf$ requires an input at the bottom-left. Thus in order for backward stability to compose, we need a stronger version that guarantees we can find perturbed inputs mapping to values besides the computed output.

\subsection{Backward Error Lenses in \textsc{Bean}}

\citet{bean} offered the first definition of backward stability that composes.  Their definition pairs the exact and approximate computations $f$ and $\tf$ with a \emph{backward map}: a function $b$ which, given an input $x$ and a value $y$ in the output space, returns a witness $\tx$ such that $f(\tx)=y$. The distance between the witness $\tx$ and the original input $x$ must also be bounded by the distance between $\tf(x)$ and $y$. Then, we can complete the diagram in \zcref{fig:fails_to_compose} by taking $b(x,\tilde{y})$ to be our perturbed input, establishing backward stability. \citet{bean} called such triples $(f, \tf, b)$ \emph{backward error lenses}. In order to bound the distance between the actual and perturbed inputs, backward error lenses impose an additional \emph{non-expansiveness} condition; in this work, we call their definition \emph{non-expansive lenses}.

\begin{definition}[Non-Expansive Backward Error Lens]\label{def:bean_def}
    The triple $(f,\tf,b)$, where $b:\F^n\times \R^m\rightharpoonup \R^n$ is a backward map, is an $(\alpha,\beta)$-\emph{non-expansive backward error lens} for $\alpha\in\R_{\geq 0}^n$ and $\beta\in\R_{\geq 0}^m$ if for any floating-point input $x$ and value $y\in\R^m$ at finite distance from $\tf(x)$, we have
    \begin{equation*}
        f(\tx)=y \quad \text{and} \quad \max_{i\in\{1,\dots,n\}}RP(x_i,\tx_i) -\alpha_i\e\leq \max_{j\in\{1,\dots,m\}}RP(\tf(x)_j,y_j)-\beta_j\e
    \end{equation*}
    where $\tx=b(x,y)$.
\end{definition}

Note that $b$ is a partial map as it is only defined for $y$ at finite distance from $\tf(x)$. We briefly comment on a few aspects of the definition. First, if we instantiate $y$ with $\tf(x)$, we can see that a non-expansive lens satisfies the standard definition from \zcref{def:std_def}, so an $(\alpha,\beta)$-non-expansive lens is $\alpha$-backward stable.  Second, non-expansive refers to the inequality: as $y$ moves farther from $\tf(x)$, $\tx$ is allowed to stray from $x$ by at most the same amount.  Finally, the constants $\alpha$ and $\beta$ play a role when composing lenses: given an $(\alpha,\beta)$-non-expansive lens and a $(\beta,\gamma)$-non-expansive lens, their composition is an $(\alpha,\gamma)$-non-expansive lens \cite{bean}.

\begin{example}
    For any $p\geq 0$, we have a $((p+1,p+1),p)$-non-expansive lens for addition:
    \begin{itemize}
        \item $f:\R^2\to\R$ with $f(x_1,x_2)=x_1+x_2$,
        \item $\tf:\F^2\to\F$ with $\tf(x_1,x_2)=(x_1+x_2)e^\delta$ and $|\delta|\leq\e$,
        \item $b:\F^2\times \R\rightharpoonup \R^2$ with 
        \begin{equation*}
            b((x_1,x_2),y)=\begin{cases}
                (x_1,x_2) & \text{if }x_1+x_2=y=0, \\[5pt]
                \left(\frac{x_1y}{x_1+x_2},\frac{x_2y}{x_1+x_2}\right) & \text{otherwise.}
            \end{cases}
        \end{equation*}
    \end{itemize}
    When $x_1+x_2$ is nonzero, we see that 
    \begin{equation*}
        f(b((x_1,x_2),y)) = \frac{x_1y}{x_1+x_2}+\frac{x_2y}{x_1+x_2}=y,
    \end{equation*}
    so given $y$, the backward map $b$ returns compatible witnesses $\tx_1$ and $\tx_2$. One can verify that the non-expansiveness condition holds with the stated constants.
\end{example}

The four basic operations ($\add$, $\sub$, $\mul$, $\divi$) satisfy \zcref{def:bean_def}, but the non-expansive condition rules out many backward stable programs. For instance, the primitive operation $\sqr$ is not a non-expansive lens---it is possible to show that the distance between the witness $\tx$ and $x$ must be larger than the distance between $\tf(x)$ and $y$. 

\subsection{Relational Backward Error Lenses}
In this work, we present a new definition of composable backward error lenses.  Our definition is weaker than that of \citet{bean}, while still satisfying composition and implying backward stability. Looking at \zcref{fig:fails_to_compose} again, we see that we do not need to find $\tx$ for \emph{every} possible $y$; we only need to handle $y$ \emph{near} $\tf(x)$.  Moreover, the distance between $\tx$ and $x$ need not be bounded by the distance between $\tf(x)$ and $y$. We call our definition \emph{relational lenses}, as we no longer require the non-expansive condition. 
\ifarxiv
Omitted proofs from this section may be found in \zcref{app:lenses_proofs}.
\fi

\begin{definition}[Relational Backward Error Lens]\label{def:shel_def}
    The triple $(f,\tf,b)$ is an $(\alpha,\beta)$-relational backward error lens if for any floating-point input $x$ and $y\in\R^m$ where $d(\tf(x),y)\leq \beta\e$, we have
    \begin{equation*}
        f(\tx)=y \quad \text{and} \quad d(x,\tx)\leq \alpha\e
    \end{equation*}
    where $\tx=b(x,y)$.
\end{definition}
As with non-expansive lenses, by instantiating $y$ with $\tf(x)$, we recover the standard definition. 

\begin{theorem}\label{thm:lens_sound}
    An $(\alpha,\beta)$-relational lens is $\alpha$-backward stable.
\end{theorem}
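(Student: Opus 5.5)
The plan is to instantiate \zcref{def:shel_def} at the computed output itself and read off \zcref{def:std_def}. Fix an $(\alpha,\beta)$-relational lens $(f,\tf,b)$ and a floating-point input $x\in X\cap\F^n$. The witness we propose is $\tx := b(x,\tf(x))$, and the whole argument is to check that the pair $(x, y)$ with $y:=\tf(x)$ is admissible in the hypothesis of \zcref{def:shel_def}.

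The first step is to verify the side condition $d(\tf(x),y)\leq\beta\e$ for $y=\tf(x)$. By the componentwise convention of the remark on $d$, this amounts to $RP(\tf(x)_j,\tf(x)_j)\leq\beta_j\e$ for each $j\in\{1,\dots,m\}$. We will use that $RP(z,z)=0$ for every real $z$. If $z=0$, this is the first case of \zcref{def:rp_metric}. If $z\neq 0$, then $\sign(z)=\sign(z)$ and $RP(z,z)=|\ln 1|=0$. Since $\beta_j\geq 0$ and $\e\geq 0$, the bound $0\leq\beta_j\e$ holds.

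With the hypothesis discharged, \zcref{def:shel_def} gives directly that $b$ is defined at $(x,\tf(x))$, that $f(\tx)=\tf(x)$, and that $d(x,\tx)\leq\alpha\e$. These are exactly the two conditions in \zcref{def:std_def} with witness $\tx\in\R^n$, so $\tf$ is $\alpha$-backward stable.

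There is no real obstacle here. The only point needing care is that $RP$ is an extended metric whose zero case is stated separately, so reflexivity has to be checked for $z=0$ as well as for $z\neq 0$. We should also note that $\beta$ plays no role in the conclusion beyond nonnegativity, which is why the theorem holds for every $\beta$.
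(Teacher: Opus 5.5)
Your proposal is correct and matches the paper's proof exactly: take $\tx=b(x,\tf(x))$, note that $d(\tf(x),\tf(x))=0\leq\beta\e$, and read off $f(\tx)=\tf(x)$ and $d(x,\tx)\leq\alpha\e$ from \zcref{def:shel_def}. Your explicit check that $RP(z,z)=0$ in both the zero and nonzero cases fills in a detail the paper leaves implicit.
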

\begin{proof}
    Given a relational lens, we prove the conditions of \zcref{def:std_def}. For input $x$, let $\tx=b(x,\tf(x))$. By the relational definition, since $d(\tf(x),\tf(x))\leq \beta\e$, we have
    \begin{equation*}
        f(\tx)=\tf(x) \quad\text{and}\quad d(x,\tx)\leq \alpha\e. \qedhere
    \end{equation*} 
\end{proof}

Relational lenses compose and therefore can be used as the framework of our static error analysis.

\begin{restatable}{theorem}{lensComp}\label{thm:lens_comp}
    Given $(\alpha,\beta)$- and $(\beta,\gamma)$-relational lenses with compatible domains, their composition is an $(\alpha,\gamma)$-relational lens.
\end{restatable}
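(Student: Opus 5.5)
We take the relational lenses to be $(f,\tf,b)$ with $f:\R^n\to\R^m$, an $(\alpha,\beta)$-relational lens, and $(g,\tg,b')$ with $g:\R^m\to\R^k$, a $(\beta,\gamma)$-relational lens. The domains are compatible, so $\tf(x)\in\F^m$ is a valid input to $\tg$. The composite is $(g\circ f,\ \tg\circ\tf,\ b'')$ with backward map
\begin{equation*}
    b''(x,z) = b\bigl(x,\ b'(\tf(x),z)\bigr).
\end{equation*}
This is the diagram of \zcref{fig:fails_to_compose} completed: first pull $z$ back through $\tg$ to a perturbed intermediate value, then pull that value back through $\tf$.

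To verify \zcref{def:shel_def}, fix a floating-point input $x$ and a $z\in\R^k$ with $d(\tg(\tf(x)),z)\leq\gamma\e$.

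\emph{Step 1.} Set $w=\tf(x)$, which is a floating-point input to the second lens. Its hypothesis $d(\tg(w),z)\leq\gamma\e$ holds, so $y:=b'(w,z)$ satisfies
\begin{equation*}
    g(y)=z \quad\text{and}\quad d(\tf(x),y)\leq\beta\e.
\end{equation*}

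\emph{Step 2.} The bound from Step 1 is exactly the hypothesis of the $(\alpha,\beta)$-lens at input $x$ and target $y$. So $\tx:=b(x,y)$ satisfies
\begin{equation*}
    f(\tx)=y \quad\text{and}\quad d(x,\tx)\leq\alpha\e.
\end{equation*}

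\emph{Step 3.} Combining, $g(f(\tx))=g(y)=z$ and $d(x,\tx)\leq\alpha\e$. This is the $(\alpha,\gamma)$-relational lens condition, since $\tx=b''(x,z)$.

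The only delicate point is well-definedness. The second lens must be applied at the point $\tf(x)$, so that point has to be a floating-point value in the domain of $\tg$; this is what ``compatible domains'' guarantees. The map $b'$ is only applied within its promised region, and $b$ is evaluated only at $y$ with $d(\tf(x),y)\leq\beta\e$, where it is defined. Because the relational definition has no non-expansiveness inequality, no triangle-inequality bookkeeping is needed. The matching middle constant $\beta$ does all the work: the output guarantee of one lens is precisely the input hypothesis of the other.
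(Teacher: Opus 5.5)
Your proposal is correct and follows the paper's proof essentially step for step. The paper uses the same composite backward map $b(x,z)=b_1(x,b_2(\tf_1(x),z))$. It applies the $(\beta,\gamma)$-lens at $\tf_1(x)$ to get the $\beta\e$ bound on the intermediate witness, then the $(\alpha,\beta)$-lens at $x$, and chains the two exactness equations to conclude.
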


Intuitively, it is easier to show the relational property than the non-expansive property since we can assume a bound on the distance from $y$ to $\tf(x)$. Given this bound $\beta$, we can adjust the bound $\alpha$ as needed; perhaps $\tx$ and $x$ are at most twice as far apart as $y$ and $\tf(x)$, so $\alpha$ is twice $\beta$. Our new definition is strictly weaker than that of \citet{bean} and thus allows more programs.

\begin{theorem}
    An $(\alpha,\beta)$-non-expansive lens $(f,\tf,b)$ is also an $(\alpha,\beta)$-relational lens.
\end{theorem}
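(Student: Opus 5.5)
We verify the conditions of \zcref{def:shel_def} directly from those of \zcref{def:bean_def}, using the same triple $(f,\tf,b)$ and the same constants. Fix a floating-point input $x$ and a value $y\in\R^m$ with $d(\tf(x),y)\leq\beta\e$. This hypothesis says $RP(\tf(x)_j,y_j)\leq\beta_j\e<\infty$ for every $j$. In particular, $y$ is at finite distance from $\tf(x)$, so $b(x,y)$ is defined and the non-expansive guarantee applies. Setting $\tx=b(x,y)$, that guarantee immediately gives $f(\tx)=y$, which is the first relational condition.

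For the distance condition, we bound the right-hand side of the non-expansive inequality. Each term satisfies $RP(\tf(x)_j,y_j)-\beta_j\e\leq 0$, hence $\max_j\bigl(RP(\tf(x)_j,y_j)-\beta_j\e\bigr)\leq 0$. The non-expansive inequality then yields, for every $i$, $RP(x_i,\tx_i)-\alpha_i\e\leq 0$, i.e.\ $RP(x_i,\tx_i)\leq\alpha_i\e$. This is exactly $d(x,\tx)\leq\alpha\e$.

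The only delicate point is the reading of \zcref{def:bean_def}. As typeset, its inequality reads $\max_i RP(x_i,\tx_i)-\alpha_i\e$, which one could misparse as subtracting a single constant from the maximum. We will interpret both sides as maxima of the differences, $\max_i\bigl(RP(x_i,\tx_i)-\alpha_i\e\bigr)$ and $\max_j\bigl(RP(\tf(x)_j,y_j)-\beta_j\e\bigr)$, as in \citet{bean}. That reading is the one used in the argument above. Under it, a left-hand maximum at most $0$ forces every component $RP(x_i,\tx_i)-\alpha_i\e$ to be at most $0$, which is what the distance condition needs.
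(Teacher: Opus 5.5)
Your proposal is correct and follows essentially the same argument as the paper. Finiteness of the distance makes the non-expansive condition applicable, giving $f(\tx)=y$; each component difference is at most the left-hand maximum, which is at most the right-hand maximum and hence at most $0$, giving $d(x,\tx)\leq\alpha\e$, and your reading of the maxima as maxima of componentwise differences is the one the paper's proof implicitly uses.
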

\begin{proof}
    Given a non-expansive lens, we prove the conditions of a relational lens. Take an input $x$ and value $y$ such that $d(\tf(x),y)\leq \beta \e$. Since $y$ and $\tf(x)$ are at finite distance, by the non-expansive definition, $f(\tx)=y$. Moreover, for each $k\in\{1,\dots,n\}$, 
    \begin{equation*}
        RP(x_k,\tx_k) -\alpha_k\e\leq\max_{i\in\{1,\dots,n\}}RP(x_i,\tx_i)-\alpha_i\e\leq \max_{j\in\{1,\dots,m\}}RP(\tf(x)_j,y_j)-\beta_j\e\leq 0,
    \end{equation*}
    the final inequality following from the condition that $d(\tf(x),y)\leq \beta\e$. By transitivity, $RP(x_k,\tx_k)\leq \alpha_k\e$, and this is true for each $k$, so we conclude $d(x,\tx)\leq \alpha\e$, satisfying the relational definition.
\end{proof} 

However, the converse is not true: there are relational lenses that are not non-expansive lenses. 

\subsection{Examples of Relational Lenses}\label{sec:lens_ex}
To show that our new generality is useful, we define relational lenses for square root and natural logarithm operations, for which we cannot build non-expansive lenses.

\begin{restatable}{example}{sqrtLens}\label{ex:sqrt_lens}
    For any $p\geq 0$, we have a $(2p+2,p)$-relational lens for square root:
    \begin{itemize}
        \item $f:\R\to\R$ with $f(x)=\sqrt{|x|}$,
        \item $\tf:\F\to\F$ with $\tf(x)=\sqrt{|x|}e^\delta$ where $|\delta|\leq \e$ (from our model in \zcref{def:olver_model}),
        \item $b:\F\times\R\rightharpoonup\R$ with $b(x,y)=y^2$ if $x\geq 0$, otherwise $-y^2$.
    \end{itemize}
\end{restatable}

Next, we assume that we have a correctly rounded natural logarithm function; we let $a$ be the upper bound on the floating-point numbers $\F$.
\begin{restatable}{example}{logLens}
    For any $p\geq 0$, we have a $(3a(p+1),p)$-relational lens for logarithm:
    \begin{itemize}
        \item $f:\R_{\geq 1}\to\R$ with $f(x)=\ln(x)$,
        \item $\tf:\F_{\geq 1}\to\F$ with $\tf(x)=\ln(x)e^\delta$ where $|\delta|\leq \e$,
        \item $b:\F\times\R\rightharpoonup\R$ with $b(x,y)=e^y$.
    \end{itemize}
\end{restatable}

With our new definition of composable backward stability, we have more lenses with which to construct backward stable programs. In our system, every subprogram satisfies the relational lens definition (\zcref{def:shel_def}). By composition of relational lenses (\zcref{thm:lens_comp}), the full program is also a relational lens, and thus is backward stable (\zcref{thm:lens_sound}).

\begin{remark}
  The terminology \emph{backward error lens} refers to lenses in functional and bidirectional programming \cite{lenses}. While there are many variants, a lens usually consists of a \emph{forward map} that extracts or computes and a \emph{backward map} that updates the source, subject to some \emph{lens laws}. In our setting, our two forward maps model the ideal and floating-point computations, and the backward map propagates error on the output back to the inputs. The main difference between our relational lenses and prior non-expansive lenses \cite{bean} is in the lens laws: the requirements on the forward and backward maps.
\end{remark}

%% file: figures/compose.tex
\begin{tikzcd}[cramped]
	x && {\tilde{f}(x)} && {\tilde{g}(\tilde{f}(x))} \\
	{\exists\tilde{x}} && {\exists \tilde{y}} \\
	{??}
	\arrow[from=1-1, to=1-3]
	\arrow[from=1-3, to=1-5]
	\arrow["f"{description}, from=2-1, to=1-3]
	\arrow["g"{description}, from=2-3, to=1-5]
	\arrow["f"{description}, from=3-1, to=2-3]
\end{tikzcd}

%% file: sections/category.tex
Our new definition of backward error lens is closed under composition, but it is tedious to construct backward error lenses for larger programs solely via composition. To address this challenge, we present a novel category $\Shel$ of \emph{shifted error lenses} where morphisms are relational lenses, and objects represent a set of possible values, along with a collection of possible ways to \emph{shift} or \emph{perturb} a value. Like other categories of lenses, $\Shel$ has rich categorical structure offering a variety of ways to combine relational lenses beyond simple composition. Thus we can work within the category, freely combining morphisms using these structures, while ensuring that the result is a valid relational lens. To prove backward stability of a given program, it suffices to construct a corresponding morphism in $\Shel$; the flexibility of the category allows a rich variety of backward error analyses, even for the same program.

In this section, we define the objects and morphisms of the category $\Shel$, show that it supports several general categorical constructions, and define several useful morphisms. 
\ifarxiv 
Omitted proofs can be found in \zcref{app:category_proofs}.
\fi

\subsection{Objects}
We think of an object in $\Shel$ as representing a program variable.  An object consists of a set of values---the range of the variable---as well as a group of \emph{shifts}, which describe how to perturb each value. Technically, shifts perturb the variable via a group action, and the magnitude of the perturbation is captured by a size function on the shifts. Objects also carry a constant that bounds the size of shifts introduced by morphisms, similar to $\alpha$ in a backward error lens (\zcref{def:bean_def}).

\begin{definition}
    An object in $\Shel$ is a triple $(X,S,p)$ where
\begin{itemize}
    \item $X$ is a set,
    \item $(S,*,e)$ is an abelian group equipped with
    \begin{itemize}
        \item a right action on $X$, also denoted $x*s$, and
        \item a componentwise norm $\|\cdot\|:S\to \R_{\geq 0}^n$,
    \end{itemize}
    \item and $p\in\R_{\geq 0}^n$ is a bound.
\end{itemize}
We call $n$ the \emph{dimension} of $S$, and we think of it as the number of independent shifts in $S$. All inequalities and equalities involving the norm are understood componentwise. The norm satisfies 
\begin{itemize}
    \item $\|s\|=0$ if and only if $s=e$ (positive definiteness), and
    \item $\|s_1*s_2\|\leq \|s_1\|+\|s_2\|$ (triangle inequality).
\end{itemize}
\end{definition}

We abbreviate $(X,S,p)$ as $X$ when clear. By definition, the group action $*:X\times S\to X$ satisfies $(x* s_1)* s_2=x*(s_1* s_2)$ and $x* e=x$. We think of $x*s$ as ``$x$ shifted by $s$.''

\begin{example}
    A single real variable is represented by $(\R,\R^{+},p)$ for some $p\in\R_{\geq 0}$, where $\R^{+}$ is the group $(\R,+,0)$. In this case, $\R$ is both the range of the variable and the group of shifts, which has dimension one and norm $|\cdot|$ (absolute value). The right action is $x* s=xe^s$, recalling our rounding error model (\zcref{def:olver_model}). The bound $p$ constrains the size of the shift. We check that the group action is well-defined:
    \begin{align*}
        (x *s_1)*s_2 = (xe^{s_1})e^{s_2} = xe^{s_1+s_2} = x*(s_1+s_2) \quad\text{and} \quad x*0 = xe^0 = x.
    \end{align*}
\end{example}

In our concrete instantiation of $\Shel$, the range of our variables and the groups of shifts will always be the reals. However, we work with a general presentation which may be instantiated with complex numbers or rotational shifts, for example.

\subsection{Morphisms}
Morphisms are a generalization of relational backward error lenses. A morphism is a triple $(f,\tf,b)$ where, as in \zcref{def:shel_def}, $f$ and $\tf$ are exact and approximate maps and $b$ is the backward map. There is an important difference: in a relational lens, the backward map directly takes a value $y$ in the output space and returns an $\tx$.  In $\Shel$, $b$ takes a \emph{shift} on the output and returns a \emph{shift} on the input---the value $y$ is implicitly described as a shift of $\tf(x)$, and the perturbed input $\tx$ is described by a shift of $x$. While this approach may seem rather indirect, we will soon see that these shifts enable our category to precisely capture variables that must be perturbed in some correlated way in order to prove backward stability. Formally, morphisms in $\Shel$ are defined as follows.

\begin{definition} \label{def:shel_morph}
    A morphism $(f,\tf,b):(X,S,p)\to (Y,T,q)$ is a triple of functions $f:X\to Y$, $\tf:X\rightharpoonup Y$, and $b:T\to S$ that satisfy the following. For all $x\in \dom(\tf)$ and $t\in T$ such that $\|t\|\leq q$,
    \begin{equation*}
        (1)\quad f(x* b(t))=\tf(x)*t \quad\text{and}\quad (2) \quad \|b(t)\|\leq p.
    \end{equation*}
\end{definition}

To develop some intuition for this definition, let's see how these morphisms generalize relational backward error lenses (\zcref{def:shel_def}). Setting $y=\tf(x)*t$ and $\tx=x*b(t)$, we can think of $y$ as $\tf(x)$ shifted by $t$, and $\tx$ as $x$ shifted by $b(t)$. The backward map encodes how a shift on $\tf(x)$ propagates back to a shift on $x$. Under this reading, the precondition $\|t\|\leq q$ says that $y$ is close to $\tf(x)$, corresponding to $d(\tf(x),y)\leq q$. The first condition is exactly $f(\tx)=y$ and the second becomes $d(x,\tx)\leq p$, recovering the definition of relational lens. Thus, in this category, we use the terms morphism and lens interchangeably.

For some more intuition, consider the identity shift $e\in T$ with no effect on the output $\tf(x)$. Since $\|e\|=0\leq q$, by the conditions we can conclude
\begin{equation*}
    (1) \quad f(x*b(e))=\tf(x)*e=\tf(x) \quad\text{and}\quad (2) \quad\|b(e)\|\leq p ,
\end{equation*}
so $x*b(e)$ gives us back the standard backward stability witness (\zcref{def:std_def}). 

\begin{example}
    We construct a morphism for the square root lens from \zcref{ex:sqrt_lens}. For any $p\in\R_{\geq 0}$, we define the morphism $\sqr:(\R,\R^+,2p+2\e)\to (\R,\R^+,p)$ with maps
    \begin{itemize}
        \item $f:\R\to\R$ is $f(x)=\sqrt{|x|}$,
        \item $\tf:\F\to\F$ is $\tf(x)=\sqrt{|x|}e^\delta$ for some $|\delta|\leq \e$ (from our model in \zcref{def:olver_model}),
        \item $b:\R\to\R$ is $b(s)=2s+2\delta$.
    \end{itemize}
\end{example}
When the computed output is unperturbed, the shift on the input is $b(0)=2\delta$, which we can think of as the error just from the square root computation. When the output is perturbed further by shift $s$, the resulting perturbation on $x$ itself is $2s$. By rewriting $y$ as a perturbation on $\tf(x)$ and $\tx$ as a perturbation on $x$ and working only with shifts, our proofs of the conditions simplify greatly. We now verify the conditions for these maps to form a $\Shel$ morphism.
\begin{proof}
    Take $x\in\F$ and $s\in\R$ such that $|s|\leq p$. Now,
    \begin{align*}
        (1)\quad f(x*b(s)) &= \sqrt{|xe^{2s+2\delta}|} = \sqrt{|x|}e^{s+\delta} & (2)\quad |b(s)| &= |2s+2\delta| \leq 2|s|+2|\delta| \\
        &= \tf(x)*s & &\leq 2p+2\e,
    \end{align*}
    satisfying the conditions.
\end{proof}

\subsection{Composition and Identity}
Just like relational lenses, morphisms in $\Shel$ compose.
\begin{restatable}{theorem}{morphismComp}
    Given morphisms $(f_1,\tf_1,b_1):(X,S,p)\to (Y,T,q)$ and $(f_2,\tf_2,b_2):(Y,T,q)\to (Z,U,r)$, their composition
    \begin{equation*}
        (f,\tf,b)=(f_2\circ f_1,\tilde{f}_2\circ \tilde{f}_1,b_1\circ b_2)
    \end{equation*}
    is a morphism $(X,S,p)\to (Z,U,r)$. Moreover, composition is associative.
\end{restatable}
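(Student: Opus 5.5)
The plan is to verify the two conditions of \zcref{def:shel_morph} directly for the composite triple, chaining the conditions of the two given morphisms in the obvious order: first the backward map $b_2$ of the outer morphism, then the backward map $b_1$ of the inner one. The domain of $\tf=\tf_2\circ\tf_1$ is the set of $x\in\dom(\tf_1)$ with $\tf_1(x)\in\dom(\tf_2)$, so we fix such an $x$ and a shift $u\in U$ with $\|u\|\leq r$.

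First, set $y=\tf_1(x)\in\dom(\tf_2)$ and apply the morphism conditions of $(f_2,\tf_2,b_2)$ at $y$ and $u$. This gives $f_2(y*b_2(u))=\tf_2(y)*u$ and $\|b_2(u)\|\leq q$. The second inequality is exactly the hypothesis needed to invoke $(f_1,\tf_1,b_1)$ at $x$ with the shift $t=b_2(u)$. Doing so yields $f_1(x*b_1(t))=\tf_1(x)*t$ and $\|b_1(t)\|\leq p$.

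Now both conditions for the composite follow. For condition (1), apply $f_2$ to the first equation from the inner morphism and rewrite with the equation from the outer one:
\begin{equation*}
f_2(f_1(x*b_1(b_2(u)))) = f_2(\tf_1(x)*b_2(u)) = \tf_2(\tf_1(x))*u .
\end{equation*}
Condition (2) is just $\|b_1(b_2(u))\|\leq p$, which we already have. The matching of bounds is the whole point: the output bound $q$ of the first morphism coincides with the input bound of the second, just as in \zcref{thm:lens_comp}.

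Associativity is immediate, since each of the three components composes as ordinary (partial) function composition. We have $f_3\circ(f_2\circ f_1)=(f_3\circ f_2)\circ f_1$, and similarly for the $\tf_i$ including their domains. For the backward maps the order reverses, $(b_1\circ b_2)\circ b_3=b_1\circ(b_2\circ b_3)$, so the composite triples coincide.

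The only step needing care is the appeal to condition (2) of the outer morphism. That condition is stated for points in $\dom(\tf_2)$, so we must use it at the specific point $y=\tf_1(x)$. In examples like $\sqr$, the backward map implicitly depends on the rounding error $\delta$ of the input being processed. Reading the definition so that $b_2$ is evaluated relative to $y=\tf_1(x)$, and $b_1$ relative to $x$, makes the chain above go through unchanged. I expect no genuine obstacle beyond stating this bookkeeping consistently.
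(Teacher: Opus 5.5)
Your proof is correct and matches the paper's argument: use condition (2) of the outer morphism to get $\|b_2(u)\|\leq q$, apply the inner morphism at the shift $b_2(u)$, chain the two exactness equations, and inherit associativity from ordinary function composition. Your extra care in requiring $\tf_1(x)\in\dom(\tf_2)$, and in evaluating each backward map at its own input point, fills in bookkeeping the paper leaves implicit without changing the argument.
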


Completing the basic category, the identity morphism is $(\id_X,\id_X,\id_S):(X,S,p)\to (X,S,p)$. The lens conditions are straightforward: for $x\in X$ and $s\in S$ such that $\|s\|\leq p$, we have
\begin{align*}
    (1) \quad \id_X(x*b(s))&=x*s=\id_X(x)*s \qquad (2) \quad \|b(s)\|=\|s\|\leq p.
\end{align*}

\begin{theorem}
    $\Shel$ forms a category with objects $(X,S,p)$ and morphisms $(f,\tf,b)$ generalizing the properties of a relational backward error lens.
\end{theorem}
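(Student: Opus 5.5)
The statement follows by checking the category axioms one at a time, using the composition theorem just proved and the identity verification given above. First, the composition theorem gives a well-defined composite for any composable pair. Second, it also gives associativity: the triple for $h\circ(g\circ f)$ and for $(h\circ g)\circ f$ is the same. Its first two components are composites of functions (and partial functions) in the forward direction, and its third component is a composite of functions in the backward direction. In each case associativity is inherited from ordinary function composition.

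Next come the unit laws. The identity $(\id_X,\id_X,\id_S)$ was already shown above to satisfy conditions (1) and (2) of \zcref{def:shel_morph}, so it is a morphism $(X,S,p)\to(X,S,p)$. Take any $(f,\tf,b):(X,S,p)\to(Y,T,q)$ and precompose with the identity. The composite is $(f\circ\id_X,\tf\circ\id_X,\id_S\circ b)=(f,\tf,b)$. Postcomposition with $(\id_Y,\id_Y,\id_T)$ similarly gives $(\id_Y\circ f,\id_Y\circ\tf,b\circ\id_T)=(f,\tf,b)$. There is one subtle point: $\tf$ is partial, so $\dom(\tf\circ\id_X)=\dom(\tf)$ must be checked. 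This holds because $\id_X$ is total.

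If the composition theorem's proof is taken as given, the only real obstacle is verifying the morphism conditions for the composite itself. For completeness I would recheck it briefly. Take $x\in\dom(\tf_2\circ\tf_1)$ and $u\in U$ with $\|u\|\leq r$. Condition (2) for the second morphism gives $\|b_2(u)\|\leq q$. That is exactly the precondition needed to apply the first morphism at $t=b_2(u)$, which yields $f_1(x*b_1(b_2(u)))=\tf_1(x)*b_2(u)$ and $\|b_1(b_2(u))\|\leq p$. Applying $f_2$ and then condition (1) for the second morphism at the point $\tf_1(x)\in\dom(\tf_2)$ gives $f_2(\tf_1(x)*b_2(u))=\tf_2(\tf_1(x))*u$. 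Chaining these equalities gives condition (1) for the composite.

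The key design point is that the bound $q$ on the shared object $(Y,T,q)$ is used twice: as the output of one morphism's condition (2) and as the input precondition of the other. This is why morphisms compose only when the middle object, including its bound, matches. Finally, the statement that $\Shel$ morphisms generalize relational lenses follows from the translation $y=\tf(x)*t$, $\tx=x*b(t)$ discussed after \zcref{def:shel_morph}.
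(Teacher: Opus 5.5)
Your proposal is correct and follows the paper's own route: the paper gives no separate proof of this theorem and assembles it from the composition theorem (the same chaining you give, where condition (2) of the second lens supplies the precondition $\|b_2(u)\|\leq q$ for the first) and the inline identity-morphism check, with associativity called straightforward. Your explicit unit laws and your use of $x\in\dom(\tf_2\circ\tf_1)$ rather than just $x\in\dom(\tf_1)$ are slightly more careful than the paper's write-up, but they do not change the argument.
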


\subsection{Tensor Product}
The tensor product provides one basic way to combine two objects; we will often think of the tensor product of a list of objects as a \emph{context} of variables. Intuitively, values of the tensor product are pairs of values from the two objects, and the shifts of the tensor product are pairs of shifts, each acting on the respective component.

\begin{definition}[Tensor Product]
The tensor product of two objects is
\begin{equation*}
    (X_1,S_1,p_1)\otimes (X_2,S_2,p_2)=(X_1\times X_2,S_1\times S_2,(p_1,p_2))
\end{equation*}
where $S_1\times S_2$ is the direct product of groups with dimension $n_1+n_2$. We define $\|(s_1,s_2)\|=(\|s_1\|,\|s_2\|)$. The right action of $S_1\times S_2$ on $X_1\times X_2$ is 
\begin{equation*}
    (x_1,x_2)* (s_1,s_2)=(x_1*s_1,x_2*s_2).
\end{equation*}
\end{definition}

Intuitively, the tensor product combines objects independently---the group action allows the two components to be perturbed with different shifts, possibly drawn from different groups. Given two morphisms $f_1:X_1\to Y_1$ and $f_2:X_2\to Y_2$, we can define a morphism $f_1\otimes f_2:X_1\otimes X_2\to Y_1\otimes Y_2$ to act in parallel on $X_1$ and $X_2$. For example, given the tensor product of three objects representing variables
\begin{equation*}
    (\R,\R^+,p_1)\otimes (\R,\R^+,p_2)\otimes (\R,\R^+,p_3),
\end{equation*}
we may perform $\add$ on the first two and $\sqr$ on the third in parallel, concretely forming a morphism with two outputs and approximating program
\begin{equation*}
    \tf(x,y,z)=(\add~x~y,\sqr~z).
\end{equation*}
In practice, it is common for all but one of the parallel morphisms to be the identity, thus applying morphisms one at a time. We can show that $\otimes$ is a bifunctor and that we have a unit, associators, and unitors, allowing us to reassociate variables in a context. Moreover, the tensor product is symmetric, so we may rearrange tensored variables in the context as we wish.
\begin{restatable}{theorem}{symMon}\label{lem:sym_mon}
    $\Shel$ is a symmetric monoidal category with tensor product $\otimes$ and unit $I=(\{\diamond\}, \{e\},0)$.
\end{restatable}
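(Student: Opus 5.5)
We will verify the symmetric monoidal structure directly. In every case the structure maps are built from the corresponding structure in $\mathbf{Set}$ on the value components, together with the corresponding group isomorphisms on the shift components. The coherence axioms (pentagon, triangle, hexagon, $\sigma\circ\sigma=\id$) are then inherited, because composition in $\Shel$ is componentwise: $(f_2\circ f_1,\tf_2\circ\tf_1,b_1\circ b_2)$. Any diagram of structure maps commutes in $\Shel$ as soon as it commutes in $\mathbf{Set}$ for the first two components and, with arrows reversed, in the category of abelian groups for the third. So the real work is twofold: check that each structure map is a genuine morphism (\zcref{def:shel_morph}), and check that $\otimes$ is a bifunctor.

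\textbf{Step 1: $I$ and $\otimes$ on objects.} $I$ is a valid object with dimension $0$. The trivial group acts trivially, and the empty norm is vacuously positive definite. For $X_1\otimes X_2$, the product action is a right action. The norm $(\|s_1\|,\|s_2\|)$ inherits positive definiteness and the triangle inequality componentwise, because $(s_1,s_2)=(e,e)$ if and only if both components are $e$.

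\textbf{Step 2: bifunctoriality.} For $(f_i,\tf_i,b_i):(X_i,S_i,p_i)\to(Y_i,T_i,q_i)$, define
\begin{equation*}
f_1\otimes f_2=(f_1\times f_2,\ \tf_1\times\tf_2,\ b_1\times b_2).
\end{equation*}
Take $(x_1,x_2)\in\dom(\tf_1)\times\dom(\tf_2)$ and $(t_1,t_2)$ with $(\|t_1\|,\|t_2\|)\le(q_1,q_2)$. Condition (1) splits as $f_i(x_i*b_i(t_i))=\tf_i(x_i)*t_i$, and condition (2) splits as $\|b_i(t_i)\|\le p_i$. Both follow from the hypotheses on each factor. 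This works precisely because the bound on the tensor is the pair $(p_1,p_2)$ and the componentwise order decouples. Preservation of identities and of composition is immediate from the componentwise formulas.

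\textbf{Step 3: structure isomorphisms.} The associator is
\begin{equation*}
\alpha=\bigl(\mathrm{assoc},\ \mathrm{assoc},\ \mathrm{assoc}^{-1}\bigr):(X_1\otimes X_2)\otimes X_3\to X_1\otimes(X_2\otimes X_3),
\end{equation*}
where the backward map reassociates shifts in the opposite direction. The unitors are $\lambda=(\pi_2,\pi_2,t\mapsto(e,t))$, with $\rho$ defined analogously. The symmetry is $\sigma=(\mathrm{swap},\mathrm{swap},\mathrm{swap})$. Each is a morphism:
\begin{itemize}
\item Condition (1) holds because the actions are componentwise, so reassociating or swapping commutes with acting; for unitors, $(\diamond,x)*(e,t)\mapsto x*t$.
\item Condition (2) holds because the norm vector is merely permuted (or $\|e\|=0$ is adjoined) and the bounds are permuted in the same way.
\end{itemize}
Each has an inverse of the same form, giving isomorphisms. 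Naturality reduces to naturality in $\mathbf{Set}$ for $f$ and $\tf$, and to naturality in groups for $b$.

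\textbf{Main obstacle.} The delicate point is partiality of $\tf$ and the domain convention in the morphism condition. Equality of morphisms must be checked on $\dom(\tf)$, and for $\tf_1\times\tf_2$ the domain is the product of domains. I would first confirm that the laws hold as equalities of partial functions, which is routine since all structure maps have total approximate maps. The remaining detail is to fix the convention for the ordering of norm components in the dimension $n_1+n_2$, so that the associator's bound $((p_1,p_2),p_3)$ matches $(p_1,(p_2,p_3))$ under flattening.
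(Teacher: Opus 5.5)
Your proposal is correct and follows the same route as the paper, which likewise checks that the componentwise product $(f_1\times f_2,\tf_1\times\tf_2,b_1\times b_2)$, the left unitor with backward map $b(s)=(e,s)$, and the swap are valid morphisms. You go further by also treating the associator and the validity of the objects, and by deriving coherence from the componentwise composition law. One small correction: the backward maps of general morphisms are not group homomorphisms (e.g.\ $b(s)=2s+2\delta$ for $\sqr$), so naturality in the third component should be argued in $\mathbf{Set}$ rather than in abelian groups, where it holds just as easily.
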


\subsection{Share Product}
As is common in lens categories, the tensor product is not a Cartesian product. For instance, we cannot define a diagonal lens $X\to X\otimes X$, as the unrestricted duplication of variables may not maintain backward stability. We have a restricted form of duplication, however: variables may be copied as long as they are then perturbed \emph{identically}. Concretely, we can define the following duplication lens.

\begin{lemma}\label{lem:duplicator}
    For any object $(X,S,p)$, we have a morphism
    \begin{equation*}
        \Delta:(X,S,p)\to (X\times X, S, p)
    \end{equation*}
    where the action in the target object is $(x_1,x_2)*s=(x_1*s,x_2*s)$.
\end{lemma}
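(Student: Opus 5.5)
The plan is to exhibit the three maps explicitly and then check the object axioms for the target and the two morphism conditions of \zcref{def:shel_morph}. I take $\Delta=(f,\tf,b)$ with $f(x)=(x,x)$, $\tf(x)=(x,x)$ (total, so $\dom(\tf)=X$), and $b=\id_S:S\to S$. Intuitively, duplication performs no floating-point computation and so introduces no error of its own, and a single shared shift on the output pair is simply pulled back unchanged to the input.

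First I will confirm that $(X\times X,S,p)$ is a legitimate object. The shift group, its norm, its dimension and the bound are all inherited unchanged from $(X,S,p)$, so positive definiteness and the triangle inequality are immediate. It only remains to check that the diagonal rule $(x_1,x_2)*s=(x_1*s,x_2*s)$ is a right action. This follows componentwise from the action on $X$:
\begin{equation*}
    ((x_1,x_2)*s_1)*s_2=(x_1*(s_1*s_2),x_2*(s_1*s_2))=(x_1,x_2)*(s_1*s_2),
\end{equation*}
and $(x_1,x_2)*e=(x_1,x_2)$.

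Next I verify the lens conditions. Take any $x\in X$ and $s\in S$ with $\|s\|\leq p$. For condition (1),
\begin{equation*}
    f(x*b(s))=(x*s,x*s)=(x,x)*s=\tf(x)*s.
\end{equation*}
For condition (2), $\|b(s)\|=\|s\|\leq p$. Both conditions therefore hold, and $\Delta$ is a morphism.

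There is no real obstacle here; the only point needing care is the object check, namely that the diagonal action is well defined. The whole construction hinges on the target sharing the single group $S$, rather than using $S\times S$ as the tensor product would. With $S\times S$, condition (1) would demand $(x*s_1,x*s_2)=f(x*b(s_1,s_2))$, which is impossible for $s_1\neq s_2$. That failure is exactly why the paper says a general diagonal $X\to X\otimes X$ does not exist.
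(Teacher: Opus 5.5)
Your proof is correct and matches the paper's: the paper takes $\Delta(x)=\tilde{\Delta}(x)=(x,x)$ and $b(s)=s$, then checks conditions (1) and (2) exactly as you do. You also verify that the diagonal rule is a genuine right action; the paper leaves that step implicit, and including it is harmless.
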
 
This group action ensures that the same shift is applied to the two copies of the duplicated variable, so they can always be reconciled into a single perturbed variable. This lens is a prime example of how group actions in our category can describe correlated perturbations, enabling more flexible use of variables while guaranteeing backward stability.
\begin{proof}
    We define $\Delta(x)=\tilde{\Delta}(x)=(x,x)$ and $b(s)=s$. Let us prove the lens conditions for $x\in X$ and $s\in S$ such that $\|s\|\leq p$:
    \begin{equation*}
        (1)\quad \Delta(x*s) = (x*s,x*s) =(x,x)*s=\tilde{\Delta}(x)*s \qquad
        (2)\quad \|b(s)\|=\|s\|\leq p,
    \end{equation*}
    satisfying the conditions.
\end{proof}

We can generalize this idea to any two objects with the same shift group:
\begin{equation}\label{eq:share_tens}
    \share: (X_1,S,p)\otimes (X_2,S,p)\to (X_1\times X_2, S,p).
\end{equation}
This lens takes two objects to their \emph{share product}, bonding them so they are always perturbed identically. We can use the share product to define lenses for $\add$, $\sub$, and $\mul$ that distribute error equally onto their operands. 

\begin{restatable}{example}{addSubMul}
    We have morphisms
    \begin{align*}
        \add&:(\R\times\R,\R^+,p+\e)\to (\R,\R^+,p) \\
        \sub&:(\R\times\R,\R^+,p+\e)\to (\R,\R^+,p)\\
        \mul&:(\R\times\R,\R^+,(p+\e)/2)\to (\R,\R^+,p).
    \end{align*}
\end{restatable}
\begin{proof}
    We detail the lens for $\sub$. For any $p\geq 0$, we define 
    \begin{equation*}
        f(x,y)=x-y \qquad \tf(x,y)=\sub~x~y=(x-y)e^\delta \qquad b(s)=s+\delta.
    \end{equation*}
    For $(x,y)\in \F\times \F$ and $s\in\R$ such that $|s|\leq p$,
    \begin{align*}
        (1) \quad f((x,y)*b(s)) &= f(x*(s+\delta),y*(s+\delta)) & (2) \quad |b(s)| &=|s+\delta| \leq |s|+|\delta| \\
        &= xe^{s+\delta}-ye^{s+\delta} =\tf(x,y)*s &&\leq p+\e,
    \end{align*}
    satisfying the conditions.
\end{proof}

\subsection{Push Product}
We now define a new construct called the \emph{push product} that expresses a different relationship between variables than the share product does. It is motivated by a program with two subprograms $x$ and $\mul~x~y$: we may compute with $\mul~x~y$ and push all the error onto $y$ without affecting $x$, but perturbing $x$ necessarily perturbs $\mul~x~y$. The push product, a generalization of the tensor product, captures this relationship, which is parametrized by a function $i$ that tells us how perturbations on the first object are pushed onto the second object.

\begin{definition}[Push Product]
Given a group homomorphism $i:S_1\to S_2$, the push product $X_1\star_iX_2$ of two objects is 
\begin{equation*}
    (X_1,S_1,p_1)\star_i (X_2,S_2,p_2)=(X_1\times X_2, S_1\times S_2, (p_1,p_2))
\end{equation*}
just like the tensor product. However, the right action of $S_1\times S_2$ on $X_1\times X_2$ is now
\begin{equation*}
    (x_1,x_2)*(s_1,s_2)=(x_1*s_1,x_2*i(s_1)*s_2). 
\end{equation*}
The second object in a push product is said to be \emph{dependent} on the first.
\end{definition}

Intuitively, the group action enforces that if we shift $x_1$ by $s_1$, we must shift $x_2$ by at least $i(s_1)$, plus a possible additional shift $s_2$. The tensor product is a special case of the push product when we use the trivial homomorphism $\varphi_e$ that maps everything to the identity. For an object $X_1\star_{\varphi_e}X_2$, when we shift $x_1$ by $s_1$, we shift $x_2$ by the identity, that is to say, not shift $x_2$ at all. Hence, we can define the tensor product in terms of the push product:
\begin{equation*}
    X_1\otimes X_2=X_1\star_{\varphi_e} X_2.
\end{equation*}
Using the push product, we can define a multiplication lens that pushes all the error onto its second argument while returning its first argument for further use, as in \zcref{eq:dmul}.

\begin{example}
    We define the lens $\dmul:(\R,\R^+,p)\otimes(\R,\R^+,q+\e)\to (\R,\R^+,p)\star_{\id}(\R,\R^+,q)$ by 
    \begin{equation*}
        f(x,y)=(x,xy) \qquad \tf(x,y)=(x,xye^\delta) \qquad b(s,t)=(s,t+\delta).
    \end{equation*}
\end{example}
\begin{proof}
    We use $*_1$ to denote the group action in the source object (the tensor product) and $*_2$ for the target (the push product). For $(x,y)\in\F\times \F$ and $(s,t)\in \R\times\R$ such that $|s|\leq p$ and $|t|\leq q$,
    \begin{align*}
        (1)\quad &f((x,y)*_1(s,t+\delta)) &(2)\quad \|b(s,t)\| &= \|(s,t+\delta)\| \\ 
        &= f(xe^s,ye^{t+\delta}) = (xe^s, xye^{s+t+\delta}) & &=(|s|,|t+\delta|) \\
        &= (x,xye^\delta)*_2(s,t)= \tf(x,y)*_2(s,t) & &\leq (p,q+\e),
    \end{align*}
    satisfying the conditions.
\end{proof}
This lens multiplies two independent variables $x$ and $y$ and asserts a dependency of the returned value $\mul~x~y$ on $x$ using the push product. The backward map specifies that the error $\delta$ from the multiplication is pushed entirely onto the second variable $y$. We call this lens $\dmul$ to differentiate it from the multiplication lens $\mul$, which distributes the error equally onto both operands.

The following lens asserts that if the variable $y$ is already dependent on $x$, then their product has an even stronger dependency on $x$.
\begin{example}\label{ex:dmul_lens}
    With similar proof, for any $n\in\N$, we can generalize this lens further to 
    \begin{equation*}
        \dmul:(\R,\R^+,p)\star_i(\R,\R^+,q+\e)\to (\R,\R^+,p)\star_j(\R,\R^+,q)
    \end{equation*}
    where $i(s)=n\cdot s$ and $j(s)=(n+1)\cdot s$.
\end{example}

We now define a few lenses which are useful in an analysis to manipulate a context involving the push product. First, we can modify the homomorphism $i$ as we wish by decreasing the error bound on the dependent object.

\begin{restatable}{lemma}{pushAdjust}\label{lem:push_adjust}
    We have a morphism 
    \begin{equation*}
    \push:(X_1,S_1,p_1)\star_i(X_2,S_2,p+p_2)\to (X_1,S_1,p_1)\star_j(X_2,S_2,p_2)
    \end{equation*}
    where $p=\max_{\|s_1\|\leq p_1}\|i(s_1)^{-1}* j(s_1)\|$.
\end{restatable}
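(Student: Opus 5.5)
The plan is to take the forward maps to be identities and choose a backward map that converts a shift in the $\star_j$ action into the corresponding shift in the $\star_i$ action. Define $f=\tf=\id_{X_1\times X_2}$ and
\begin{equation*}
    b(s_1,s_2)=\bigl(s_1,\; i(s_1)^{-1}*j(s_1)*s_2\bigr).
\end{equation*}
This is well-typed because $i$ and $j$ both land in $S_2$, and $S_2$ is a group.

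The first step is lens condition (1), which is a direct calculation using commutativity of $S_2$. Take $(x_1,x_2)$ and $(s_1,s_2)$ with $\|s_1\|\leq p_1$ and $\|s_2\|\leq p_2$. In the source push product we get
\begin{equation*}
    (x_1,x_2)*_i b(s_1,s_2)=\bigl(x_1*s_1,\; x_2*i(s_1)*i(s_1)^{-1}*j(s_1)*s_2\bigr)=\bigl(x_1*s_1,\; x_2*j(s_1)*s_2\bigr).
\end{equation*}
This is exactly $(x_1,x_2)*_j(s_1,s_2)$. The action axioms $(x*a)*c=x*(a*c)$ justify combining the shifts before cancelling.

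The second step is condition (2), which asks for $\|b(s_1,s_2)\|\leq(p_1,p+p_2)$ componentwise. The first component is $\|s_1\|\leq p_1$. For the second, the triangle inequality gives
\begin{equation*}
    \|i(s_1)^{-1}*j(s_1)*s_2\|\leq\|i(s_1)^{-1}*j(s_1)\|+\|s_2\|\leq p+p_2.
\end{equation*}
The bound $\|i(s_1)^{-1}*j(s_1)\|\leq p$ holds because $s_1$ lies in the set $\{\|s_1\|\leq p_1\}$ over which $p$ is defined as the maximum.

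The only real subtlety is the definition of $p$. It should be read as a componentwise supremum, and the statement implicitly assumes it is finite, so that it is a legitimate bound in $\R_{\geq 0}^n$. The inequality we need holds pointwise regardless of whether the supremum is attained. Everything else is routine, and no earlier result is needed beyond the object axioms and \zcref{def:shel_morph}.
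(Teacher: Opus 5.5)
Your proposal is correct and matches the paper's proof: the same identity forward maps, the same backward map $b(s_1,s_2)=(s_1,i(s_1)^{-1}*j(s_1)*s_2)$, the same cancellation for condition (1), and the same triangle-inequality bound for condition (2). As a minor aside, the cancellation of $i(s_1)*i(s_1)^{-1}$ needs only the action axioms and associativity, since the two factors are adjacent, so commutativity of $S_2$ is not actually used.
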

Moreover, we have projections out of the push product, and a distributive morphism.
\begin{restatable}{lemma}{projections}
    For objects $X_1$ and $X_2$, we have morphisms
    \begin{equation*}
        \pi_1:X_1\star_iX_2\to X_1\quad\text{and}\quad \pi_2:X_1\star_iX_2\to X_2.
    \end{equation*}
\end{restatable}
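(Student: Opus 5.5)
To build each projection I will exhibit an explicit triple $(f,\tf,b)$ and then check conditions (1) and (2) of \zcref{def:shel_morph}. Write the source as $(X_1,S_1,p_1)\star_i(X_2,S_2,p_2)$. Its action is $(x_1,x_2)*(s_1,s_2)=(x_1*s_1,\,x_2*i(s_1)*s_2)$, and its norm is $\|(s_1,s_2)\|=(\|s_1\|,\|s_2\|)$, with bound $(p_1,p_2)$. In both cases the forward maps are the set-theoretic projections, and the approximate map equals the exact one. The whole content is in choosing the backward map, which must send an output shift to a source shift that reproduces it exactly.

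For $\pi_1:X_1\star_iX_2\to X_1$, I take $f=\tf=\mathrm{pr}_1$ and $b(s)=(s,e)$. For condition (1), fix $\|s\|\leq p_1$. Then $f((x_1,x_2)*(s,e))=\mathrm{pr}_1(x_1*s,\,x_2*i(s)*e)=x_1*s=\tf(x_1,x_2)*s$. The second component does get perturbed by $i(s)$, but the projection discards it. For condition (2), $\|(s,e)\|=(\|s\|,0)\leq(p_1,p_2)$, using positive definiteness ($\|e\|=0$).

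For $\pi_2:X_1\star_iX_2\to X_2$, I take $f=\tf=\mathrm{pr}_2$ and $b(t)=(e,t)$. The point is that we must not shift $x_1$, since any shift $s_1$ on $x_1$ would force the extra shift $i(s_1)$ onto $x_2$. With $s_1=e$, the map $i$ is a group homomorphism, so $i(e)=e$. Condition (1) then reads $f((x_1,x_2)*(e,t))=\mathrm{pr}_2(x_1*e,\,x_2*i(e)*t)=x_2*t$. Condition (2) is $\|(e,t)\|=(0,\|t\|)\leq(p_1,p_2)$ whenever $\|t\|\leq p_2$.

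No step is a real obstacle. The only care needed is in $\pi_2$: choose the identity shift on the first component, so that the homomorphism property $i(e)=e$ kills the pushed perturbation. Also use that the action axioms $x*e=x$ and $(x*s)*s'=x*(s*s')$ let $x_2*e*t$ simplify to $x_2*t$.
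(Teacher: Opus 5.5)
Your proposal is correct and matches the paper's proof essentially verbatim. The paper also takes $f=\tf$ to be the projections, with backward maps $b(s_1)=(s_1,e_2)$ for $\pi_1$ and $b(s_2)=(e_1,s_2)$ for $\pi_2$, and checks conditions (1) and (2) exactly as you do, relying on $i(e_1)=e_2$ to make the first component's shift vanish in the $\pi_2$ case.
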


\begin{restatable}{lemma}{distributor}\label{lem:distributor}
    For objects $X_1,X_2$ and $Y_1,Y_2$, we have a morphism 
    \begin{equation*}
        \Theta:(X_1\star_i Y_1)\otimes(X_2\star_j Y_2) \to (X_1\otimes X_2)\star_{(i,j)}(Y_1 \otimes Y_2)
    \end{equation*}
    where $(i,j)$ is the parallel application of homomorphisms.
\end{restatable}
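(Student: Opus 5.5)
The plan is to take $\Theta$ to be the evident rearrangement of components on both values and shifts, and then check the two conditions of \zcref{def:shel_morph} directly. No earlier morphism construction is needed. Write the source objects as $(X_k,S_k,p_k)$ and $(Y_k,T_k,q_k)$ for $k=1,2$. Then the source of $\Theta$ has underlying set $(X_1\times Y_1)\times(X_2\times Y_2)$, shift group $(S_1\times T_1)\times(S_2\times T_2)$ and bound $((p_1,q_1),(p_2,q_2))$. The target has set $(X_1\times X_2)\times(Y_1\times Y_2)$, shift group $(S_1\times S_2)\times(T_1\times T_2)$ and bound $((p_1,p_2),(q_1,q_2))$. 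I would define
\begin{equation*}
  \Theta((x_1,y_1),(x_2,y_2))=\tilde\Theta((x_1,y_1),(x_2,y_2))=((x_1,x_2),(y_1,y_2)),
\end{equation*}
both total and involving no rounding. The backward map is the inverse shuffle on shifts,
\begin{equation*}
  b((s_1,s_2),(t_1,t_2))=((s_1,t_1),(s_2,t_2)).
\end{equation*}

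First, I would note that $(i,j):S_1\times S_2\to T_1\times T_2$, $(s_1,s_2)\mapsto(i(s_1),j(s_2))$, is a group homomorphism because $i$ and $j$ are. So the target push product is well defined, and its action is
\begin{equation*}
  ((x_1,x_2),(y_1,y_2))*((s_1,s_2),(t_1,t_2))=((x_1*s_1,x_2*s_2),(y_1*i(s_1)*t_1,\,y_2*j(s_2)*t_2)),
\end{equation*}
using that the action of $T_1\times T_2$ on $Y_1\times Y_2$ is componentwise.

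Second, for condition (1), I would apply the source action to $b$ of a target shift, using the tensor action outside and the push actions inside. This gives $((x_1*s_1,\,y_1*i(s_1)*t_1),(x_2*s_2,\,y_2*j(s_2)*t_2))$. Applying $\Theta$ yields exactly the expression above, which equals $\tilde\Theta(x)$ shifted by $((s_1,s_2),(t_1,t_2))$.

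Third, for condition (2), the norm of a tensor or push product is the concatenation of the component norms. So $\|b(s,t)\|=((\|s_1\|,\|t_1\|),(\|s_2\|,\|t_2\|))$ is a permutation of $((\|s_1\|,\|s_2\|),(\|t_1\|,\|t_2\|))$. The bound vectors are permuted in the same way, so the componentwise hypothesis $\|(s,t)\|\leq((p_1,p_2),(q_1,q_2))$ transfers directly to $\|b(s,t)\|\leq((p_1,q_1),(p_2,q_2))$.

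There is no real analytic obstacle. The step needing the most care is this bookkeeping: checking that the concatenated norm vectors and bounds are matched componentwise under the same reindexing, and that the homomorphism $(i,j)$ pushes $s_1$ only onto $y_1$ and $s_2$ only onto $y_2$, which is exactly what the source push products enforce.
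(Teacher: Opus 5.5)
Your proposal is correct and matches the paper's proof essentially step for step. You use the same rearrangement for both $f$ and $\tf$, the same inverse-shuffle backward map $b((s_1,s_2),(t_1,t_2))=((s_1,t_1),(s_2,t_2))$, and the same direct check of conditions (1) and (2) by unfolding the tensor and push actions. Your extra remark that $(i,j)$ is a group homomorphism, which makes the target push product well defined, is a small detail the paper leaves implicit.
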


While it is in general not possible to compose maps in parallel with a push product, parallel composition is possible under certain conditions
\ifarxiv
detailed in \zcref{lem:push_parallel}.
\else
\cite{supplement}.
\fi

\begin{example}
    We can apply the $\add$ lens to the dependent in a push product with a lens
    \begin{equation*}
        \id\star\ \add:(\R,\R^+,p)\star_i(\R\times\R,\R^+,q+\e)\to (\R,\R^+,p)\star_i(\R,\R^+,q)
    \end{equation*}
    whose maps are 
    \begin{equation*}
        f(a,(b,c))=(a,b+c) \qquad \tf(a,(b,c))=(a,(b+c)e^\delta) \qquad b(s,t)=(s,t+\delta).
    \end{equation*}
\end{example}

Finally, we can generalize the $\share$ lens (\zcref{eq:share_tens}) to convert
a push product with the same shift groups into a share product.
\begin{restatable}{lemma}{shareLens}\label{lem:share}
    Given any two objects with the same group of shifts, we can combine them into the share product with a lens:
    \begin{equation*}
        \share:(X_1,S,p_1)\star_i(X_2,S,p_2)\to (X_1\times X_2,S,p_1)
    \end{equation*}
    where $p_2=\max_{\|s\|\leq p_1}\|i(s)^{-1}*s\|$.
\end{restatable}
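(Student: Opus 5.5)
We construct the lens directly, with exact and approximate maps both the identity on $X_1\times X_2$, so $f(x_1,x_2)=\tf(x_1,x_2)=(x_1,x_2)$. The source object is the push product $X_1\star_i X_2$, whose shifts are pairs in $S\times S$ with action $(x_1,x_2)*(s_1,s_2)=(x_1*s_1,\,x_2*i(s_1)*s_2)$. The target is the share product, whose shift group is $S$ acting diagonally: $(x_1,x_2)*s=(x_1*s,x_2*s)$. Given a target shift $s$, we need a source shift $(s_1,s_2)$ that moves $x_1$ by $s$ and moves $x_2$ by $s$ in total. So we take $s_1=s$ and choose $s_2$ to cancel the forced dependent shift $i(s)$, replacing it with $s$:
\begin{equation*}
    b(s)=(s,\; i(s)^{-1}*s).
\end{equation*}

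For condition (1), let $x=(x_1,x_2)\in X_1\times X_2$ and $s\in S$ with $\|s\|\leq p_1$. Using the group action axioms,
\begin{equation*}
    f(x*b(s)) = (x_1*s,\; x_2*i(s)*i(s)^{-1}*s) = (x_1*s,\; x_2*s) = \tf(x)*s .
\end{equation*}
The cancellation $i(s)*i(s)^{-1}=e$ is immediate because $S$ is a group and the action is associative.

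For condition (2), the norm on the push product is $\|(s_1,s_2)\|=(\|s_1\|,\|s_2\|)$, so we must check
\begin{equation*}
    (\|s\|,\;\|i(s)^{-1}*s\|)\leq (p_1,p_2).
\end{equation*}
The first component holds by hypothesis. The second holds by the definition of $p_2$ as the maximum of $\|i(s)^{-1}*s\|$ over all $\|s\|\leq p_1$, and our $s$ lies in that range.

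The main obstacle is bookkeeping rather than mathematics. The definition of $p_2$ must be read componentwise, as a supremum taken in each coordinate of $\R^n_{\geq 0}$. We will also note that if the supremum is not attained, or is infinite, the statement should be read with $p_2$ as that supremum and the bound still holds. Finally, we should double-check the typing: the share product has shift group $S$ of dimension $n$ with bound $p_1$, matching the stated target $(X_1\times X_2,S,p_1)$.
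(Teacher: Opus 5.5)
Your proposal is correct and matches the paper's proof: identity exact and approximate maps, backward map $b(s)=(s,\,i(s)^{-1}*s)$, with condition (1) following from $i(s)*i(s)^{-1}=e$ and condition (2) following from the definition of $p_2$. Your remark that $p_2$ should be read as a componentwise supremum is a reasonable clarification that the paper leaves implicit.
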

Intuitively, the lenses $\push$, $\share$, $\Delta$, $\pi$, and $\Theta$ represent trivial computations---their underlying exact and approximate maps are in most cases the identity. However, they are useful to model the manipulation of variables and error terms often required in a backward error analysis. As we will see in the next section, when we build complex backward stable programs in the category, we will often use these lenses freely until the context is in the correct shape to apply a ``computational'' lens for some arithmetic operation.

%% file: sections/case_studies.tex
In this section, we show how to construct $\Shel$ lenses witnessing backward stability for example programs. For readability, we elide lenses that permute and reassociate the tensor product.
\begin{example}
    To warm up and introduce our notation, consider the program
    \begin{equation*}
        \tf(x,y)=\add~x~(\mul~x~y) \quad \text{implementing} \quad f(x,y)=x+xy.
    \end{equation*}
    In $\Bean$~\citep{bean}, this program is disallowed because it reuses the variable $x$, and the add operation pushes backward error onto $x$. But it is backward stable with error bounds of $\e$ for both $x$ and $y$, and we can form a lens to prove this in $\Shel$. We interpret the input as the object
    \begin{equation*}
        (\R,\R^+,\e)\otimes (\R,\R^+,\e),
    \end{equation*}
    i.e. two variables $x$ and $y$ with backward error bounds of $\e$ each. Next, we apply the $\dmul$ lens: 
    \begin{align*}
        \dmul:(\R,\R^+,\e)\otimes(\R,\R^+,\e)&\to (\R,\R^+,\e)\star_1(\R,\R^+,0)
    \end{align*}
    with underlying program $(x,y)\mapsto(x,\mul~x~y)$; this lens threads the
    input variable $x$ through to the output, so it can be reused in later computations. Here, $\star_n$ is the push product with homomorphism $s\mapsto n\cdot s$. Now, we need to add our two subprograms. To get the context in the right shape to apply the $\add$ lens, we use the lens
    \begin{equation*}
        \share:(\R,\R^+,\e)\star_1(\R,\R^+,0)\to (\R\times\R,\R^+,\e)
    \end{equation*}
    from \zcref{lem:share}, which corresponds to the identity $(x,z)\mapsto (x,z)$. Finally, we use the $\add$ lens:
    \begin{equation*}
        \add:(\R\times\R,\R^+,\e)\to(\R,\R^+,0)
    \end{equation*}
    corresponding to the program $(x,z)\mapsto \add~x~z$. Composing these three lenses gives us a final lens
    \begin{equation*}
        (f,\tf,b):(\R,\R^+,\e)\otimes(\R,\R^+,\e)\to(\R,\R^+,0)
    \end{equation*}
    with our desired maps $f,\tf$. We can write this lens concisely as follows:
    \begin{align*}
        \denot{x}_\e\otimes\denot{y}_\e&\fixedarrow{\dmul}\denot{x}_\e~ \star_1~\denot{\mul~x~y}_0 \\
        &\fixedarrow{\share}\denot{x,~\mul~x~y}_\e \\
        &\fixedarrow{\add}\denot{\add~x~(\mul~x~y)}_0
    \end{align*}
    where each context is an object and each arrow is a morphism.  In this notation, $\denot{e_1,\dots,e_n}_p$ represents the object $(\R^n,\R^+,p)$ and the expressions $e_k$ describe the approximate maps that have been applied to reach this object.

    Now, we can conclude a backward error guarantee for $\tf$, reading off the per-variable backward error from the source context $\denot{x}_\e\otimes \denot{y}_\e$, using the two conditions on $\Shel$ morphisms (\zcref{def:shel_morph}). For floating-point inputs $x$ and $y$, by the first morphism condition,
    \begin{equation*}
        f((x,y)*b(0))=f(\tx,\ty)=\tf(x,y)*0=\tf(x,y).
    \end{equation*}
    By the second, $\|b(0)\|=(d(x,\tx),d(y,\ty))\leq (\e,\e)$, showing that the distance between the inputs and perturbed inputs is at most $\e$ each, and thus our program is backward stable.
\end{example}

\begin{example}
  To demonstrate the square root lens, we perform the Cholesky factorization of
  a $2\times 2$ symmetric positive definite (SPD) matrix, which computes a lower triangular matrix $L$ such that $A=LL^\top$:
    \begin{equation*}
        \begin{bmatrix}
            a_{11} & a_{21} \\
            a_{21} & a_{22}
        \end{bmatrix} =
        \begin{bmatrix}
            \ell_{11} & 0 \\
            \ell_{21} & \ell_{22}
        \end{bmatrix} 
        \begin{bmatrix}
            \ell_{11} & \ell_{21} \\
            0 & \ell_{22}
        \end{bmatrix}.
    \end{equation*}
    The entries for $L$ are computed as follows:
    \begin{equation*}
        \ell_{11} = \sqrt{a_{11}} \qquad\qquad
        \ell_{21} = \frac{a_{21}}{\ell_{11}} \qquad\qquad
        \ell_{22} = \sqrt{a_{22}-\ell_{21}^2}.
    \end{equation*}
    Beginning with the context of input variables, we first compute $\ell_{11}$:
    \begin{equation*}
        \denot{a_{11}}_{2\e} \otimes \denot{a_{21}}_{3\e} \otimes \denot{a_{22}}_{3\e}\fixedarrow{\sqr~\otimes~\id~\otimes~\id}\denot{\underbrace{\sqr~a_{11}}_{\ell_{11}}}_{0} \otimes \denot{a_{21}}_{3\e} \otimes \denot{a_{22}}_{3\e}
    \end{equation*}
    Then, to compute $\ell_{21}$, we use a division lens:
    \begin{equation*}
        \divi:(\R,\R^+,p+\e)\otimes (\R,\R^+,0)\to (\R,\R^+,p)
    \end{equation*}
    which pushes all the error onto the numerator, defined in \ifarxiv\zcref{lem:div_lens}\else\cite{supplement}\fi.\footnote{For simplicity, we define the lens to return $0$ if the denominator is $0$, but if $A$ is SPD then this will not happen.} Our context is
    \begin{equation*}
        \denot{\ell_{11}}_{0} \otimes \denot{a_{21}}_{3\e} \otimes
        \denot{a_{22}}_{3\e}\fixedarrow{\divi~\otimes~\id}\denot{\underbrace{\divi~a_{21}~\ell_{11}}_{\ell_{21}}}_{2\e} \otimes \denot{a_{22}}_{3\e} .
    \end{equation*}
    The last element, $\ell_{22}$, is computed by:
    \begin{align*}
        \denot{\ell_{21}}_{2\e} \otimes \denot{a_{22}}_{3\e} 
        &\fixedarrow{\Delta~\otimes~\id} \denot{\ell_{21},~\ell_{21}}_{2\e} \otimes \denot{a_{22}}_{3\e} \tag{\zcref{lem:duplicator}} \\
        &\fixedarrow{\mul~\otimes~\id}\denot{\mul~\ell_{21}~\ell_{21}}_{3\e} \otimes \denot{a_{22}}_{3\e} \\
        &\fixedarrow{\share}\denot{a_{22},~\mul~\ell_{21}~\ell_{21}}_{3\e} \tag{\zcref{lem:share}} \\
        &\fixedarrow{\sub} \denot{\sub~a_{22}~(\mul~\ell_{21}~\ell_{21})}_{2\e} \\
        &\fixedarrow{\sqr}\denot{\sqr~(\sub~a_{22}~(\mul~\ell_{21}~\ell_{21}))}_0
    \end{align*}
    where $\Delta$ is the diagonal map and $\share$ is the share product map. By inspecting the annotations on the input, we conclude that computing $\ell_{22}$ is backward stable with a backward error bound of $2\e$ on $a_{11}$ and $3\e$ on $a_{21}$ and $a_{22}$. We can construct similar lenses to prove backward stability of computing the other entries $\ell_{11}$ and $\ell_{21}$.\footnote{We note that these lenses compute a single entry of $L$, rather than the entire matrix $L$---proving backward stability of the matrix version requires the SPD property of $A$, which seems difficult to encode directly in $\Shel$.}
\end{example}

\begin{example}
    Consider the polynomial $f(a,x)=x+ax^2$ and program 
    \begin{equation*}
        \tf(a,x)=\add~x~(\mul~(\mul~a~x)~x).
    \end{equation*}
    A backward stability result for this program needs to show
    \begin{align*}
        \tf(a,x)&= (x+ax^2e^{\delta_1+\delta_2})e^{\delta_3} \\
        &= xe^{\delta_3}+ax^2e^{\delta_1+\delta_2+\delta_3} \\
        &= \tx+\tilde{a}\tx^2 =f(\tilde{a},\tilde{x})
    \end{align*}
    for some $\tilde{a}$ and $\tilde{x}$. This requires that we pick
    $\tx=xe^{\delta_3}$. To figure out how to pick $\tilde{a}$, we need
    \begin{align*}
        ax^2e^{\delta_1+\delta_2+\delta_3} &= \tilde{a}\tx^2 = \tilde{a}(xe^{\delta_3})^2 =\tilde{a}x^2e^{2\delta_3},
    \end{align*}
    so we must set $\tilde{a}=ae^{\delta_1+\delta_2-\delta_3}$; note that by the triangle inequality, the backward error on $a$ is $d(a,\tilde{a})=|\delta_1+\delta_2-\delta_3|\leq 3\e$. Intuitively, this analysis compensates for the squared error from the $x^2$ term by pushing the \emph{inverse} of the error onto $a$. This analysis is possible in $\Shel$, relying on the fact that the set of shifts for an object forms a \emph{group}, with each shift having an inverse. More concretely, we can construct the following lens in $\Shel$: 
    \begin{align*}
        \denot{a}_{3\e} \otimes \denot{x}_{\e} &\fixedarrow{\dmul} \denot{x}_{\e} ~\star_1~\denot{\mul~a~x}_{2\e} \\
        &\fixedarrow{\dmul} \denot{x}_{\e} ~\star_2~ \denot{\mul~(\mul~a~x)~x}_{\e} \\
        &\fixedarrow{\share} \denot{x,~\mul~(\mul~a~x)~x}_\e \tag{\zcref{lem:share}} \\
        &\fixedarrow{\add} \denot{\add~x~(\mul~(\mul~a~x)~x)}_0.
    \end{align*}
    Recall that the backward map of $\share$ here is $b(s)=(s,s^{-1})$; the inverse shift corresponds to the $-\delta_3$ shift that we saw above. We conclude that the program is backward stable with a backward error bound of $3\e$ on $a$, and $\e$ on $x$.
\end{example}

\begin{example}
    For positive weights $w_1$, $w_2$ and values $x_1$, $x_2$, we can compute a weighted average: 
    \begin{equation*}
        f(w,x)=\frac{w_1x_1+w_2x_2}{w_1+w_2}.
    \end{equation*}
    The following lenses compose to form this program:
    \begin{align*}
        &\denot{w_1}_{\e} \otimes \denot{w_2}_{\e} \otimes \denot{x_1}_{4\e} \otimes \denot{x_2}_{4\e} \\ &\fixedarrow{\dmul~\otimes~\dmul} (\denot{w_1}_\e ~\star_1~ \denot{\mul~w_1~x_1}_{3\e}) \otimes (\denot{w_2}_{\e}~\star_1~\denot{\mul~w_2~x_2}_{3\e}) \\
        &\fixedarrow{\push~\otimes~\push} \denot{w_1}_\e \otimes \denot{\mul~w_1~x_1}_{2\e} \otimes \denot{w_2}_{\e}\otimes \denot{\mul~w_2~x_2}_{2\e} \tag{\zcref{lem:push_adjust}}\\
        &\fixedarrow{\share~\otimes~\share} \denot{w_1,~w_2}_{\e} \otimes \denot{\mul~w_1~x_1,~\mul~w_2~x_2}_{2\e} \tag{\zcref{lem:share}} \\
        &\fixedarrow{\add~\otimes~\add}  \denot{\add~w_1~w_2}_{0} \otimes \denot{\add~(\mul~w_1~x_1)~(\mul~w_2~x_2)}_{\e} \\
        &\fixedarrow{\divi} \denot{\divi~(\add~(\mul~w_1~x_1)~(\mul~w_2~x_2))~(\add~w_1~w_2)}_0
    \end{align*}
    where the $\push$ lens adjusts the push product homomorphism and the $\share$ lens constructs the share product. The lens shows that the program is backward stable with backward error bounds of $4\e$ on each of the values and $\e$ on the weights. We could have produced a tighter analysis by applying more specialized lenses instead of the $\push$ lens, which frees the variables from under the push product. For example,
    \begin{align*}
        &\denot{w_1}_{0}\otimes \denot{w_2}_{0}\otimes \denot{x_1}_{4\e}\otimes\denot{x_2}_{4\e} \\
        &\fixedarrow{\dmul~\otimes~\dmul} (\denot{w_1}_0 ~\star_1~ \denot{\mul~w_1~x_1}_{3\e}) \otimes (\denot{w_2}_{0}~\star_1~\denot{\mul~w_2~x_2}_{3\e}) \\
        &\fixedarrow{\Theta} (\denot{w_1}_0\otimes \denot{w_2}_0)\star_{(1,1)}(\denot{\mul~w_1~x_1}_{3\e}\otimes\denot{\mul~w_2~x_2}_{3\e}) \tag{\zcref{lem:distributor}}\\
        &\fixedarrow{\share~\star~\share} \denot{w_1,~w_2}_0~\star_{1}~\denot{\mul~w_1~x_1,~\mul~w_2~x_2}_{3\e} \\
        &\fixedarrow{\id~\star~\add} \denot{w_1,~w_2}_0~\star_{1}~\denot{\add~(\mul~w_1~x_1)~(\mul~w_2~x_2)}_{2\e} \\
        &\fixedarrow{\mathbf{adddiv}}\denot{\divi~(\add~(\mul~w_1~x_1)~(\mul~w_2~x_2))~(\add~w_1~w_2)}_0,
    \end{align*}
    where $\Theta$ is the push product distributor and $\mathbf{adddiv}$ is a lens which performs addition then division, pushing the error from both computations onto the numerator (defined in \ifarxiv\zcref{lem:add_div_lens}\else\cite{supplement}\fi). This alternative analysis is tighter---since it pushes no error onto the weights---although it is a bit more complicated and requires another primitive lens.
\end{example}

%% file: sections/implementation.tex
Having constructed a solid categorical foundation for sound backward error analysis, we now use it to guide an automated error analysis. Considering the lenses applied to analyze a program in \zcref{sec:case_studies}, we note that lenses can be roughly classified as \emph{computational} (such as $\add$) or \emph{structural} (such as $\pi_1$ and $\share$). The computational lenses are determined by the syntax of the program, but choosing the right structural lenses to apply between computational steps is nontrivial. At a high level, we design an automated method to search over possible structural lenses to build a lens corresponding to a given target program.

In more detail, our algorithm takes in a symbolic expression $e$ with $n$ variables and returns whether a lens was found of the shape
\begin{equation*}
    (\R,\R^+,p_1)\otimes\dots\otimes (\R,\R^+,p_n)\to (\R,\R^+,0)
\end{equation*}
with underlying maps corresponding to the program $e$. If we can find such a lens, then we conclude a backward error guarantee as in \zcref{sec:case_studies}. Our algorithm starts from the target object $(\R,\R^+,0)$ and works backward, applying maps corresponding to the program's symbolic expression in reverse.

We implement our algorithm in $\egglog$, a reasoning system combining logic programming and equality saturation \cite{egglog}. Finding lenses requires non-deterministic search over possible lenses at each step (as there are often many choices), and rewriting equivalent mathematical expressions and isomorphic variable contexts. With its combination of Datalog-style proof search and equality saturation, $\egglog$ is a natural solver for our analysis. We call our implementation $\eggshel$.

\subsection{A Syntactic Representation}
An immediate challenge with searching for morphisms in $\Shel$ is that the category is very general and abstract. To narrow our search, we develop a syntax representing a useful subcategory of $\Shel$, which we then encode in $\eggshel$. First, we consider expressions of the form
\begin{equation*}
    e ::= x\in\Var \mid \mathbf{unop}~e\mid \mathbf{binop}~e_1~e_2
\end{equation*}
where $\mathbf{unop}\in \{\sqr\}$, $\mathbf{binop}\in\{\add, \mul\}$. An expression $e$ represents a numerical program with variables as inputs and a single output. To keep the prototype simple, we do not implement all potential operations, though subtraction and division could easily be added. Next, we define a context $\Gamma$ in three layers as
\begin{align*}
    B &::= (x_1,\dots,x_n:(i, \R^n, p)), \quad x_k\in\Var,i\in\N,n\in\N,p\in\R_{\geq 0} \\
    T &::= B\star(B_1\otimes \dots\otimes B_m),\quad m\geq 0 \\
    \Gamma &::= \cdot \mid T\otimes \Gamma
\end{align*}
We assume that all variables are distinct. A context denotes the tensor and push product of several $\Shel$ objects. A \emph{base} $B$ represents an object $(\R^n,\R^+,p\e)$ along with $n$ bound variables $x_1,\dots,x_n$. A \emph{tree} $T$ represents the push product of one base $B$ onto $m$ dependent bases. For each dependent base in a tree, the constant $i$ specifies the homomorphism onto that object; for the root base of a tree, $i$ is ignored. In $\Shel$, push products can be formed between arbitrary objects, but we restrict our search to trees of height one for simplicity. Finally, a context $\Gamma$ is the tensor product of several trees. Formally, we inductively interpret each context $\Gamma$ as a $\Shel$ object: 
\begin{gather*}
    \denot{B} = (\R^n, \R^+,p\e) \qquad
    \denot{T} =\denot{B}\star_{i}(\denot{B_1}\otimes\dots\otimes \denot{B_m}), \quad i(s)=(i_1\cdot s,\dots,i_m\cdot s) \\
    \denot{\Gamma} = I \mid \denot{T}\otimes\denot{\Gamma}, \quad I=\text{unit object}
\end{gather*}
Corresponding to the shape of these contexts, we form a tuple of expressions $E$ as 
\begin{align*}
    E_B ::= (e_1,\dots,e_n) \qquad
    E_T ::= (E_B; {E_B}_1,\dots,{E_B}_m) \qquad
    E ::= \cdot \mid (E_T,E)
\end{align*}

Now, we define a relation $\RR$ on two contexts and a tuple of expressions. Intuitively, if $\RR(\Gamma_1,E,\Gamma_2)$, then there is a $\Shel$ morphism from $\denot{\Gamma_1}\to\denot{\Gamma_2}$ whose ideal and approximate maps correspond to $E$. Concretely, for each expression $e$, we can define its ideal and floating-point semantics $(f,\tf)$ with $f=\denot{e}_{\text{re}}$ and $\tf=\denot{e}_{\text{fp}}$, where:
\begin{align*}
    \denot{x}_{\text{re}} &= \pi ~\text{(projection)} & \denot{x}_{\text{fp}} &= \pi \\
    \denot{\mathbf{unop}~e}_{\text{re}} &= unop \circ\denot{e}_{\text{re}} & \denot{\mathbf{unop}~e}_{\text{fp}} &= \mathbf{unop}\circ \denot{e}_{\text{fp}} \\
    \denot{\mathbf{binop}~e_1~e_2}_{\text{re}} &= binop\circ (\denot{e_1}_{\text{re}},\denot{e_2}_{\text{re}}) & \denot{\mathbf{binop}~e_1~e_2}_{\text{fp}} &= \mathbf{binop}\circ (\denot{e_1}_{\text{fp}},\denot{e_2}_{\text{fp}})
\end{align*}
Tuples of expressions $E$ are interpreted as tuples of the corresponding functions. 

A prerequisite for $(\Gamma_1,E,\Gamma_2)$ to be in $\RR$ is that $E$ is a program which operates on variables in the context $\Gamma_1$ and outputs expressions corresponding to the variables of $\Gamma_2$. Thus, we require that the free variables of $E$ are a subset of the variables bound in $\Gamma_1$ (write $\FV(E)\subseteq\Var(\Gamma_1)$), and $E$ has the same shape as $\Gamma_2$.

\begin{figure}
  \centering
  \input{figures/relation}
  \caption{A subset of the rules defining $\RR(\Gamma_1,E,\Gamma_2)$.}
  \Description{A number of syntactic rules defining the relation $\RR$.}
  \label{fig:relation}
\end{figure}

The relation $\RR$ is defined inductively; we show a subset of the rules in \zcref{fig:relation}. Our implementation uses a slightly larger set of rules, which we found to be sufficient to analyze our example programs. Note that because we only implement a subset of the lenses available in $\Shel$, completeness (finding a lens for any backward stable program) is not guaranteed in $\eggshel$. For example, programs evaluating higher-order polynomials are not currently supported in $\eggshel$, though they are backward stable and do correspond to morphisms in $\Shel$.

Each rule corresponds to a lens introduced in \zcref{sec:category} applied to a concrete object. For each rule, we have several versions corresponding to where the lens may be applied in a context, such as in the multiple \textsc{Add} rules, but we have omitted these variations for brevity. Importantly, the \textsc{Trans} rule performs the syntactic substitution of the expressions in $E_1$ for the variables in $\Gamma_2$, which is straightforward since $E_1$ and $\Gamma_2$ have the same shape. Then $\RR(\Gamma_1,E_3,\Gamma_3)$ is well-defined since $\FV(E_3)=\FV(E_1)\subseteq \Var(\Gamma_1)$. With the relation in place, we can state and prove soundness.

\begin{theorem}[Soundness]
    If $\RR(\Gamma_1,E,\Gamma_2)$, then we have a morphism $(f,\tf,b):\denot{\Gamma_1}\to\denot{\Gamma_2}$ in $\Shel$ where $f=\denot{E}_{\text{re}}$ and $\tf=\denot{E}_{\text{fp}}$.
\end{theorem}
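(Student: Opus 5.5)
We argue by induction on the derivation of $\RR(\Gamma_1,E,\Gamma_2)$, proving for each rule of \zcref{fig:relation} (and its positional variants) that if the premises yield $\Shel$ morphisms with the stated ideal and approximate maps, then so does the conclusion. The invariant is exactly the statement: a morphism $\denot{\Gamma_1}\to\denot{\Gamma_2}$ whose first two components are $\denot{E}_{\text{re}}$ and $\denot{E}_{\text{fp}}$. Each rule is designed to correspond to a lens from \zcref{sec:category}, so the proof is a case analysis in which we exhibit that lens, instantiated at the concrete objects $\denot{B}=(\R^n,\R^+,p\e)$ and the tree and context interpretations.

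For the base and computational rules (variable/identity, \textsc{Add}, \textsc{Mul}, \textsc{Sqrt}, and the $\dmul$-style rules producing push products), we invoke the corresponding example lenses: the $\add$, $\mul$, $\sub$ lenses, the $\sqr$ morphism, and $\dmul$ of \zcref{ex:dmul_lens}. We check that the bound annotations in the rule match those lenses, e.g.\ $p+1$ on the input base for $\add$ with output $p$, and $2p+2$ for $\sqr$. Structural rules use $\Delta$ (\zcref{lem:duplicator}), $\share$ (\zcref{lem:share}), $\push$ (\zcref{lem:push_adjust}), the projections, and $\Theta$ (\zcref{lem:distributor}). We must verify that their side bounds, such as $\max_{|s|\le p_1}|i s - j s| = |i-j|p_1$, coincide with the arithmetic in the syntactic rule. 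Rearranging trees within $\Gamma$ and bases within a tree is handled by the symmetric monoidal structure (\zcref{lem:sym_mon}). Rules that act on one component of a context are obtained by tensoring with identities, using bifunctoriality of $\otimes$. Rules that act inside a dependent of a tree, as in $\id\star\add$, use the parallel-composition-under-push-products lemma, whose hypotheses we must confirm hold for the scalar homomorphisms $s\mapsto n\cdot s$. For every case we also confirm that the underlying maps are the semantics of the syntactic $E$: the structural lenses have identity or projection maps, and the computational ones apply $op$ and $\mathbf{op}$ to the relevant components.

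The \textsc{Trans} rule is handled by the composition theorem for $\Shel$ morphisms. By the induction hypothesis we have morphisms $\denot{\Gamma_1}\to\denot{\Gamma_2}$ with maps $\denot{E_1}$ and $\denot{\Gamma_2}\to\denot{\Gamma_3}$ with maps $\denot{E_2}$, and their composite has maps $\denot{E_2}\circ\denot{E_1}$. The remaining work is a substitution lemma: $\denot{E_2[E_1/\Var(\Gamma_2)]}_{\text{re}}=\denot{E_2}_{\text{re}}\circ\denot{E_1}_{\text{re}}$, and likewise for the fp semantics. We prove this by structural induction on expressions. The variable case works because projection out of $\Var(\Gamma_2)$ composed with the tuple $\denot{E_1}$ is the corresponding component of $E_1$, since $E_1$ and $\Gamma_2$ have the same shape. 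The $\mathbf{unop}$ and $\mathbf{binop}$ cases follow by associativity of composition.

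I expect the main obstacle to be the rules acting under a push product, and the bookkeeping of bounds there. Unlike tensor, the push product is not a bifunctor in general, so each such rule needs an explicit check that the backward map respects the twisted action $(x_1,x_2)*(s_1,s_2)=(x_1 e^{s_1},x_2e^{i s_1+s_2})$ and the componentwise bounds. If the general parallel lemma does not directly apply, I would verify those specific rules by hand, as in the $\id\star\add$ example: set $b(s,t)=(s,t+\delta)$ and check the two morphism conditions directly.
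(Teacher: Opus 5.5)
Your proposal is correct and takes essentially the same approach as the paper: induction on the derivation of $\RR$, matching each rule to its lens (\textsc{Cong} via tensoring with an identity, \textsc{Share} via the share lens, \textsc{DMul} via $\dmul$). You actually supply more than the paper's proof, which displays only those three cases; in particular you give the substitution lemma $\denot{E_2[E_1/\Var(\Gamma_2)]}=\denot{E_2}\circ\denot{E_1}$ that makes \textsc{Trans} follow from composition in $\Shel$, and you handle the push-product cases via the parallel-composition lemma.
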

\begin{proof}
    By induction on the derivation of the relation $\RR$, and case analysis on the last rule applied. We display a few representative cases. 

    \textbf{Case (\textsc{Cong}).} This rule says that we can apply the relation to only some part of a context. Categorically, it corresponds to the congruence property of the tensor product. By our inductive hypothesis on the premise, we have a morphism $(f,\tf,b):\denot{\Gamma_2}\to\denot{\Gamma_3}$. We can apply it in parallel with $\id_{\denot{\Gamma_1}}$ to get a morphism
    \begin{equation*}
        \id_{\denot{\Gamma_1}}\otimes (f,\tf,b):\denot{\Gamma_1}\otimes\denot{\Gamma_2}\to\denot{\Gamma_1}\otimes\denot{\Gamma_3}
    \end{equation*}
    which corresponds to $(\id_{\Gamma_1},E)$. 
    
    \textbf{Case (\textsc{Share}).} This rule corresponds to the share product lens (\zcref{lem:share}), whose underlying maps are identity, in the specific instantiation of 
    \begin{equation*}
        \share: (\R^n,\R^+,p)\otimes (\R^m,\R^+,p)\to (\R^{n+m},\R^+,p).
    \end{equation*}
    
    \textbf{Case (\textsc{DMul}).} This rule corresponds to an instantiation of the $\dmul$ lens (\zcref{ex:dmul_lens}),
    \begin{equation*}
        \dmul:(\R,\R^+,p)\otimes (\R,\R^+,q+\e)\to (\R,\R^+,p)\star_1(\R,\R^+,q)
    \end{equation*}
    with underlying approximate map $\tf(x_1,x_2)=(x_1,\mul~x_1~x_2)$. \qedhere
\end{proof}

\subsection{Implementation}
$\eggshel$ encodes the relation $\RR$ in $\egglog$, a reasoning system which allows users to employ both logic programming and equality saturation \cite{egglog}. In $\egglog$, users define inductive datatypes, rewrite rules for equivalent terms, and rules to define relations on terms. Given a starting term, $\egglog$ derives as many facts as possible using these rules until saturation---when no more facts can be derived. For example, we define expression tuples $E$ and contexts $\Gamma$ as

\begin{lstlisting}
(datatype Expr            (datatype Ctx
    (Var String)              (Base f64 Expr f64) 
    (Vars Expr Expr)          (Tens Ctx Ctx)
    (Sqrt Expr)               (Star Ctx Ctx)) 
    (Add Expr Expr)
    (Mul Expr Expr))
\end{lstlisting}
Inductively, an expression tuple may be a single variable (a string), the square root, addition, or multiplication of expressions, or a tuple of expressions. A context is either a base object, the tensor of two contexts, or the star (push product) of one context onto another. We include some rewrite rules for equivalent expressions, such as this one expressing commutativity of
floating-point addition:
\begin{lstlisting}
(rewrite (Add e1 e2) (Add e2 e1))
\end{lstlisting}
Then, we declare a relation representing $\RR$ as a binary relation on contexts:
\begin{lstlisting}
(relation -> (Ctx Ctx))
\end{lstlisting}
To avoid performing large substitutions, we store and use expressions in place of variables within each context. For example, we can assert 
\begin{lstlisting}
(-> (Base 0.0 (Add  (Var "a") (Var "b")) 2.0)
    (Base 0.0 (Sqrt (Add (Var "a") (Var "b"))) 0.0))
\end{lstlisting}
corresponding to $\RR((x:(0,\R,2)),\sqr~{x},(y:(0,\R,0)))$.

Using this framework, we encode the rules for deriving $\RR$ from \zcref{fig:relation} as $\egglog$ rules. We also include some rewrite rules for contexts, such as associativity and commutativity for the tensor product, which correspond to the isomorphisms for these context rearrangements (\zcref{lem:sym_mon}).

Finally, in $\eggshel$, the user inputs a program like 
\begin{lstlisting}
(Add (Mul a a) (Mul b b))
\end{lstlisting}
and allows the relation to saturate to maximize the chance of finding the correct lens. The tool then checks whether a starting context with those variables was found in the relation, and outputs the per-variable error bounds for that context. If multiple backward error bounds were found, $\eggshel$ outputs all of them. 

\subsection{Evaluation}
We evaluated $\eggshel$ on several challenging examples of backward stable programs, many of which are not supported by previous backward error analysis tools. All experiments were performed on a MacBook Pro with an Apple M3 chip and 16 GB of memory. We briefly summarize our conclusions.

First, in \zcref{tab:first}, we run $\eggshel$ on five practical programs: \texttt{sum} naively sums a list of $n$ variables, \texttt{linear} evaluates a function $x+a_1x+\dots +a_nx$, \texttt{norm} takes the Euclidean norm of a length $n$ vector, \texttt{quad} evaluates a quadratic function $x+a_1x^2+\dots+a_nx^2$, and \texttt{dotprod} computes the dot product of two length $n$ vectors. We include a time comparison to $\Bean$ \cite{bean} for programs supported by their tool. As expected, the inferred bound grows as we increase the size parameter $n$---larger versions of the programs perform more operations, and so incur more backward error.

We find that our tool performs well on programs with few variables and operations, but performance slows as the number of variables in the program increases. We conjecture that this behavior is due to the larger space that must be searched by $\egglog$. Indeed, an analysis of the rules applied by $\egglog$ shows that associativity and commutativity rewrites for variable contexts are significant.
Memory usage is also a significant constraint; we find that programs larger than the ones evaluated run out of memory. Future work may investigate more efficient strategies for searching the space of structural lenses. 

Second, in \zcref{tab:second}, we created several challenging examples of backward stable programs that reuse variables. The (by-hand) backward error analyses require complex manipulation of error terms, none of which can be analyzed by existing tools. The full programs are given in \ifarxiv\zcref{app:bench}.\else\cite{supplement}.\fi~ We found that our implementation was very efficient for these programs and saturated within a few tenths of a second. 

\input{figures/benchmarks.tex}

%% file: figures/relation.tex
\begin{mathpar}
    \inferrule*[right=(Trans)]
    {\RR(\Gamma_1,E_1,\Gamma_2) \\ \RR(\Gamma_2,E_2,\Gamma_3) \\\\ 
        E_3=E_2[E_1/\Var(\Gamma_2)]}
    {\RR(\Gamma_1,E_3,\Gamma_3)}

    \inferrule*[right=(Cong)]
    {\RR(\Gamma_2,E,\Gamma_3)}
    {\RR(\Gamma_1\otimes \Gamma_2, (\id_{\Gamma_1}, E),\Gamma_1\otimes\Gamma_3)}

    \inferrule*[right=($\pi_2$)]
    { }
    {\RR(B_1\star B_2,\id_{B_2},B_2)}

    \inferrule*[right=(Assoc)]
    {\Gamma=B_1\otimes\dots\otimes B_m}
    {\RR((B\star\Gamma)\otimes B_{m+1}, (\id_{B};\id_{\Gamma},\id_{B_{m+1}}),B\star(\Gamma\otimes B_{m+1}))}

    \inferrule*[right=(Share)]
    {B_1=(\vec{x}:(0,\R^n,p)) \\ B_2=(\vec{y}:(0,\R^m,p)) \\\\ 
        B_3=(\vec{z}:(0,\R^{n+m},p))}
    {\RR(B_1\otimes B_2,(\id_{B_1},\id_{B_2}),B_3)}

    \inferrule*[right=($\Delta$)]
    {B_1=(\vec{x}:(0, \R^n,p)) \\\\ 
        B_2=(\vec{y}:(0,\R^{2n},p))}
    {\RR(B_1,(\id_{B_1},\id_{B_1}),B_2)}

    \inferrule*[right=(Add-$\star$)]
    {B_1=(x_1,x_2:(i,\R^2,p+1)) \\ B_2=(y:(i, \R,p))}
    {\RR(B\star B_1,(\id_{B};\add~x_1~x_2),B\star B_2)}

    \inferrule*[right=(Add)]
    {B_1=(x_1,x_2:(0,\R^2,p+1)) \\\\
        B_2=(y:(0, \R,p))}
    {\RR(B_1,\add~x_1~x_2,B_2)}
    
    \inferrule*[right=(Mul)]
    {B_1=(x_1,x_2:(0,\R^2,(p+1)/2)) \\\\
        B_2=(y:(0, \R,p))}
    {\RR(B_1,\mul~x_1~x_2,B_2)}

    \inferrule*[right=(DMul)]
    {B_1=(x_1:(0,\R,p)) \\\\ 
        B_2=(x_2:(0,\R,q+1)) \\ B_3=(y:(1,\R,q))}
    {\RR(B_1\otimes B_2,(x_1;\mul~x_1~x_2),B_1\star B_3)}

    \inferrule*[right=(Sqrt)]
    {B_1=(x:(0,\R,2p+2)) \\\\ 
        B_2=(y:(0,\R,p))}
    {\RR(B_1,\sqr~x,B_2)}
\end{mathpar}

%% file: figures/benchmarks.tex
\begin{table}[htbp]
\small\centering
\caption{Running $\eggshel$ on variable-size programs. \textbf{Vars} gives the number of variables, \textbf{Ops} gives the number of floating-point operations, \textbf{Err} gives the maximum per-variable backward error bound in terms of $\e$ (lower is better). When multiple sets of bounds are returned, we report the smallest maximum bound. \textbf{Time} gives the $\eggshel$ timing in seconds, and \textbf{$\Bean$} gives the $\Bean$ timing in seconds, when applicable.}
\begin{tabular}{lccccc @{\hskip 2em} lccccc}
    \toprule
    \textbf{Name} & \textbf{Vars} & \textbf{Ops} & \textbf{Err} & \textbf{Time} & \textbf{$\Bean$} 
    & \textbf{Name} & \textbf{Vars} & \textbf{Ops} & \textbf{Err} & \textbf{Time} & \textbf{$\Bean$} \\
    \midrule
    \multirow{6}{*}{\texttt{sum}} 
    & 5 & 4 & 4 & 0.04 & 0.0009 & 
    \multirow{6}{*}{\texttt{linear}} 
    & 2 & 2 & 1 & 0.03 & n/a \\ \cmidrule(lr{2em}){2-6} \cmidrule(lr){8-12}
    & 10 & 9 & 9 & 1.83 & 0.001 &
    & 3 & 4 & 2 & 0.03 & n/a \\ \cmidrule(lr{2em}){2-6} \cmidrule(lr){8-12}
    & 11 & 10 & 10 & 6.49 & 0.001 & 
    & 4 & 6 & 3 & 0.11 & n/a \\ \cmidrule(lr{2em}){2-6} \cmidrule(lr){8-12}
    & 12 & 11 & 11 & 21.81 & 0.001 & 
    & 5 & 8 & 4 & 1.56 & n/a \\ \cmidrule(lr{2em}){2-6} \cmidrule(lr){8-12}
    & 13 & 12 & 12 & 69.44 & 0.001 &
    & 6 & 10 & 5 & 23.22 & n/a \\ \cmidrule(lr{2em}){2-6} \cmidrule(lr){8-12}
    & 14 & 13 & 13 & 379.88 & 0.001 &
    & 7 & 12 & 6 & 600.16 & n/a \\ 
    \midrule
    \multirow{7}{*}{\texttt{norm}} 
    & 1 & 2 & 1.5 & 0.03 & n/a &
    \multirow{4}{*}{\texttt{quad}} 
    & 2 & 3 & 3 & 0.03 & n/a \\ \cmidrule(lr{2em}){2-6} \cmidrule(lr){8-12}
    & 2 & 4 & 2 & 0.02 & n/a &
    & 3 & 6 & 4 & 0.07 & n/a \\ \cmidrule(lr{2em}){2-6} \cmidrule(lr){8-12}
    & 3 & 6 & 2.5 & 0.04 & n/a & 
    & 4 & 9 & 5 & 3.50 & n/a \\ \cmidrule(lr{2em}){2-6} \cmidrule(lr){8-12}
    & 4 & 8 & 3.0 & 0.12 & n/a & 
    & 5 & 12 & 6 & 243.41 & n/a \\ \cmidrule(lr){2-12} 
    & 5 & 10 & 3.5 & 1.36 & n/a & 
    \multirow{3}{*}{\texttt{dotprod}} 
    & 4 & 3 & 1 & 0.04 & 0.0009 \\ \cmidrule(lr{2em}){2-6} \cmidrule(lr){8-12}
    & 6 & 12 & 4 & 16.96 & n/a & 
    & 6 & 5 & 2 & 0.80 & 0.0009 \\ \cmidrule(lr{2em}){2-6} \cmidrule(lr){8-12}
    & 7 & 14 & 4.5 & 298.62 & n/a &
    & 8 & 7 & 2 & 127.82 & 0.001 \\
    \bottomrule
\end{tabular}\label{tab:first}
\end{table}

\begin{table}[htbp]
\small\centering
\caption{Running $\eggshel$ on programs with more complex backward error analyses. The exact functions are given in the \textbf{Program} column, with full programs given in \ifarxiv\zcref{app:bench}.\else\cite{supplement}.\fi}
\begin{tabular}{lcccc @{\hskip 2em} lcccc}
    \toprule
    \textbf{Program} & \textbf{Vars} & \textbf{Ops} & \textbf{Err} & \textbf{Time} & \textbf{Program} & \textbf{Vars} & \textbf{Ops} & \textbf{Err} & \textbf{Time} \\
    \midrule 
    $x+(ax+bx^2)$ & 3 & 5 & 4 & 0.04 & 
    $a+a\sqrt{b}$ & 2 & 3 & 4 & 0.03 \\
    \midrule
    $a+\sqrt{ab}$ & 2 & 3 & 4 & 0.04 &
    $\sqrt{ax+\sqrt{b}}$ & 3 & 4 & 8 & 0.04 \\
    \midrule
    $(a+b)^2$ & 2 & 3 & 1.5 & 0.02 &
    $(a + \sqrt{b})^2$ & 2 & 5 & 5 & 0.04 \\
    \midrule
    $(a+ab)(c+cd)$ & 4 & 5 & 1.5 & 0.27 &
    $\sqrt{a}\sqrt{b}$ & 2 & 3 & 3 & 0.02 \\
    \bottomrule
\end{tabular}\label{tab:second}
\end{table}

%% file: sections/related.tex
\paragraph{Automated Backward Error Analysis} Existing tools for estimating the backward error of a program use automatic differentiation, optimization, and type system-based approaches. Miller and Spooner's Algorithm 532 \cite{miller} computes a program's partial derivatives and uses them to estimate backward error. \citet{rowan}, in a technique called functional stability analysis, uses estimates on forward error and function sensitivity to calculate the backward error. \citet{fu} employ optimization to find the smallest input perturbation to the exact function that yields the computed output. Both \citet{rowan} and \citet{fu} treat a higher-precision version of the program as the exact function. These techniques differ from ours as they compute approximate bounds, often running the programs directly. \citet{bean} offered the first automated method to give sound results. They define a type system whose semantics ensure that any typeable program is backward stable. Similar to our work, they base their analysis on a category of lenses.

\paragraph{Floating-Point Verification} Sound tools for bounding floating-point error generally employ abstract interpretation, interval arithmetic, and theorem proving. Almost all the tools we surveyed compute forward error bounds rather than backward, and hence use fundamentally different techniques from ours. FLUCTUAT \cite{fluctuat}, PRECiSA \cite{precisa}, and Gappa \cite{gappa} use interval arithmetic combined with abstract interpretation or formal proofs to statically bound error at each floating-point operation. In contrast, FPTaylor \cite{fptaylor} optimizes over a symbolic Taylor series expansion of a given program and generates certificates of soundness. VCFloat \cite{vcfloat} is a Rocq library for writing formal proofs about error analysis. NumFuzz \cite{numfuzz}, closely related to $\Bean$ \cite{bean}, is a type system bounding forward error.

\paragraph{Backward Error in Numerical Analysis} Backward error has been a key concept in numerical computation for many decades, first popularized by \citet{wilkinson}. Other important works on this topic include \citet{higham_linear}, which introduced structured backward error for linear systems.  The textbooks by \citet{higham} and \citet{corless} provide many further examples of backward error analysis in numerical analysis.

\paragraph{Lenses and Lens Categories} Backward error lenses are inspired by lenses from bidirectional programming \cite{lenses}. The category $\Shel$ shares many similarities with other categories of lenses (e.g., \cite{depaiva}); for instance, it is symmetric monoidal but not Cartesian. The idea of using a lens to track shifts or perturbations is also broadly similar to edit lenses~\citep{editlens}, where the backward flow of information tracks an edit (or a ``delta'').

\paragraph{{\normalfont\texttt{egglog}}-Based Synthesizers} There are many tools leveraging the \texttt{egg} and \texttt{egglog} solvers. The closest to our work is probably Herbie \cite{herbie}, a tool for improving floating-point expressions by sampling error and applying rewrites. Herbie can be implemented in $\egglog$ \cite{egglog} to ensure all its rewrites are sound. Herbie differs from our analysis in that it attempts to improve a floating-point implementation using heuristics and sampling, while we perform a purely static analysis and use $\egglog$ to search for proofs of stability.

%% file: sections/conclusion.tex
We present a framework for backward error analysis for programs that were previously not analyzable. We offer a new definition of a backward error lens and develop the category $\Shel$, which allows us to construct these lenses. Using this category as our semantics, we implement the tool $\eggshel$, which searches for lenses that prove a given program backward stable. We evaluate our tool on known backward stable floating-point programs. We see several promising directions for future work.

\paragraph{Optimizing the Implementation} By adding rules and operations to our relation in \zcref{sec:implementation}, we would be able to support automated analysis for more programs. However, adding rules may cause the analysis to become slower. Investigating optimizations as well as more efficient methods for synthesizing backward error lenses would be valuable.

\paragraph{Conditional Backward Stability} Many proofs of backward stability in the numerical analysis literature rely on conditions on the input. For example, to return the full matrix $L$ in a Cholesky factorization in a backward stable manner requires a rewrite based on the symmetric positive definite condition. It may be possible to encode these conditions into our category, which would allow us to perform these types of analyses.

\paragraph{Probabilistic Backward Error} Finally, recent works in numerical analysis explore backward error bounds under stochastic models of rounding supported by emerging hardware. For example, \citet{probbe} demonstrate that common numerical algorithms can have significantly less backward error under probabilistic rounding. It would be interesting to extend our lens category to handle such programs.

%% file: sections/appendix.tex
\section{Backward Error Analysis By Hand}\label{app:analysis_by_hand}
\input{sections/appendix/analysis_by_hand}

\section{Proofs From Section 3: Relational Backward Error Lenses}\label{app:lenses_proofs}
\input{sections/appendix/lenses_proofs}

\section{Supplementary Material for the Category Shel}\label{app:category_proofs}
\input{sections/appendix/category_proofs}

\section{Benchmark Programs}\label{app:bench}
\input{sections/appendix/bench}

%% file: sections/appendix/analysis_by_hand.tex
We will now show an example of performing backward error analysis by hand. This is the type of analysis that our automation emulates. It is a well-known fact that the following dot product program is backward stable. Let $\tf:\F^2\times\F^2\to \F$ be the program
\begin{equation*}
    \tf((x_1,x_2),(y_1,y_2))=\add~(\mul~x_1~y_1)~(\mul~x_2~y_2).
\end{equation*}
This program models the function $f:\R^2\times\R^2\to\R$ which takes the dot product of two input vectors. Let us prove standard backward stability (\zcref{def:std_def}) for inputs $(x_1,x_2),(y_1,y_2)\in\F^2$:
\begin{align*}
    \tf((x_1,x_2),(y_1,y_2)) &= \add~(\mul~x_1~y_1)~(\mul~x_2~y_2) \\
    &= \add~(x_1y_1e^{\delta_1})~(x_2y_2e^{\delta_2}) &\tag*{by \zcref{def:olver_model}} \\
    &= (x_1y_1e^{\delta_1}+x_2y_2e^{\delta_2})e^{\delta_3} &\tag*{by \zcref{def:olver_model}} \\
    &= x_1y_1e^{\delta_1+\delta_3}+x_2y_2e^{\delta_2+\delta_3} \\
    &= \tx_1\ty_1+\tx_2\ty_2 
\end{align*}
where
\begin{align*}
    \tx_1=x_1e^{(\delta_1+\delta_3)/2} \quad \tx_2=x_2e^{(\delta_2+\delta_3)/2} \quad \ty_1=y_1e^{(\delta_1+\delta_3)/2} \quad \ty_2=y_2e^{(\delta_2+\delta_3)/2}
\end{align*}
and $|\delta_1|,|\delta_2|,|\delta_3|\leq \e$. Thus, 
\begin{equation*}
    d(x_1,\tx_1)=d(x_1,x_1e^{(\delta_1+\delta_3)/2})=|\delta_1+\delta_3|/2\leq \e,
\end{equation*}
and similarly for each of the other variables. We conclude that $\tf$ is a $(1,1,1,1)$-backward stable program modeling dot product. To get a sense for the compositional nature of the analysis, see that the inner $\mul$ programs push $\e/2$ backward error onto each of the input variables. The larger $\add$ program pushes an additional $\e$ error onto each of its operands, which are the $\mul$ subprograms, and that error is distributed evenly onto the variables.

%% file: sections/appendix/lenses_proofs.tex
This appendix contains proofs deferred from \zcref{sec:lenses}.

\lensComp*
\begin{proof}
    Let $(f_1,\tf_1,b_1)$ be an $(\alpha,\beta)$-relational lens and $(f_2,\tf_2,b_2)$ be a $(\beta,\gamma)$-relational lens both satisfying \zcref{def:shel_def}. We form the $(\alpha,\gamma)$-lens by
    \begin{equation*}
        (f_2,\tf_2,b_2)\circ (f_1,\tf_1,b_1)=(f_2\circ f_1,\tf_2\circ\tf_1,b)
    \end{equation*}
    where 
    \begin{equation*}
        b(x,z)=b_1(x, b_2(\tf_1(x),z)).
    \end{equation*}
    Now, take an input $x$ in the domain of $\tf_1$ and a value $z$ in the range of $f_2$ such that 
    \begin{equation*}
        d(\tf_2(\tf_1(x)),z)\leq \gamma\e.
    \end{equation*}
    By the distance condition on $\tf_2$,
    \begin{equation*}
        d(\tf_1(x),b_2(\tf_1(x),z)) \leq \beta\e
    \end{equation*}
    so by the distance condition on $\tf_1$,
    \begin{equation*}
        d(x,b_1(x,b_2(\tf_1(x),z)))=d(x,b(x,z))\leq \alpha\e.
    \end{equation*}
    Finally, we show that 
    \begin{align*}
        (f_2\circ f_1)(\tx) &= f_2(f_1(b_1(x,b_2(\tf_1(x),z)))) \\
        &= f_2(b_2(\tf_1(x),z)) \tag*{by the exactness condition on \text{$\tf_1$}} \\
        &= z. \tag*{by the exactness condition on \text{$\tf_2$}}
    \end{align*}
\end{proof}

\sqrtLens*
\begin{proof}
        Given $x\in\F$ and $y\in \R$ such that $d(\tf(x),y)\leq p\e$, we know from the definition of the $RP$ metric (\zcref{def:rp_metric}) that $y\geq 0$. Expanding the bound: 
    \begin{equation*}
        d(\tf(x),y)=\left|\frac{1}{2}\ln|x|+\delta-\ln (y)\right|\leq p\e
    \end{equation*}
    and therefore, by the triangle inequality,
    \begin{equation*}
        \frac{1}{2}\left|\ln|x|-2\ln (y)\right|\leq p\e+\e.
    \end{equation*}
    We compute that
    \begin{equation*}
        f(\tx)=\sqrt{y^2}=y
    \end{equation*}
    and
    \begin{equation*}
        d(x,\tx)=|\ln |x|-2\ln (y)|\leq (2p+2)\e. \qedhere
    \end{equation*} 
\end{proof}

\logLens*
\begin{proof}
    Given $x\in[1,a]\cap\F$ and $y\in\R$ such that $d(\tf(x),y)\leq p\e$, the bound tells us 
    \begin{equation*}
        |\ln(\ln(x))+\delta-\ln (y)|\leq p\e \quad\text{and therefore}\quad  |\ln(\ln(x))-\ln (y)|\leq (p+1)\e.
    \end{equation*}
    From this we also get a bound
    \begin{equation*}
        y\leq e^{(p+1)\e} \ln(x)\leq e^{(p+1)\e}\ln(a)\leq e\ln(a)\leq 3a,
    \end{equation*}
    reasonably assuming $(p+1)\e\leq 1$. We need to prove that 
    \begin{equation*}
        d(x,\tx) = |\ln(x) - y|\leq 3a(p+1)\e.
    \end{equation*}
    By the mean value theorem, there is some $c$ between $y$ and $\ln(x)$ such that 
    \begin{equation*}
        \frac{1}{c}=\frac{\ln(\ln(x))-\ln(y)}{\ln(x)-y}
    \end{equation*}
    so
    \begin{equation*}
        |\ln(x)-y| \leq |c|\cdot |\ln(\ln(x))-\ln(y)| \leq \max\{\ln(x),y\}\cdot(p+1)\e \leq 3a(p+1)\e. \qedhere
    \end{equation*}
\end{proof}

%% file: sections/appendix/category_proofs.tex
This appendix contains proofs deferred from \zcref{sec:category} and \zcref{sec:case_studies} as well as additional results about the category $\Shel$. 

\morphismComp*
\begin{proof}
    We check the conditions for composition. For $x\in \dom(\tf_1)$ and $u\in U$ such that $\|u\|\leq r$, it follows that $\|b_2(u)\|\leq q$ so we can use the conditions for $f_1$. Now,
    \begin{align*}
        (1) \quad f(x* b(u))&=f_2(f_1(x* b_1(b_2(u)))) & (2) \quad \|b(u)\|&=\|b_1(b_2(u))\| \\
        &= f_2(\tf_1(x)* b_2(u)) =\tf_2(\tf_1(x))* u & &\leq p. \\
        &= \tf(x)* u 
    \end{align*}
    Associativity of composition is straightforward.
\end{proof}

\symMon*
We check a few of the requirements for showing that $\Shel$ is a symmetric monoidal category. First, we prove that given $f_1:X_1\to Y_1$ and $f_2:X_2\to Y_2$, their parallel composition is a morphism 
\begin{equation*}
    f_1\otimes f_2:X_1\otimes X_2\to Y_1\otimes Y_2 \quad\text{with maps}\quad ((f_1, f_2), (\tf_1,\tf_2), (b_1,b_2)).
\end{equation*}
\begin{proof}
    Take $(x_1,x_2)\in \dom(\tf_1\times\tf_2)$ and $(t_1,t_2)\in T_1\times T_2$ such that $\|t_1\|\leq q_1$ and $\|t_2\|\leq q_2$. 
    \begin{align*}
        (1) \quad (f_1,f_2)&((x_1,x_2)*b(t_1,t_2)) & (2) \quad \|b(t_1,t_2)\| &=\|(b_1(t_1),b_2(t_2))\| \\
        &= (f_1,f_2)(x_1*b_1(t_1),x_2*b_2(t_2)) && =(\|b_1(t_1)\|,\|b_2(t_2)\|) \\
        &= (f_1(x_1*b_1(t_1)), f_2(x_2*b_2(t_2))) && \leq (p_1,p_2). \\
        &= (\tf_1(x_1)*t_1,\tf_2(x_2)*t_2) \\
        &= (\tf_1,\tf_2)(x_1,x_2)*(t_1,t_2)
    \end{align*}
    Thus, parallel composition is a valid morphism.
\end{proof}

Next, we show we have a left unitor, a morphism $\lambda:I\otimes X\to X$. The right unitor is defined similarly.
\begin{proof}
    Let 
    \begin{equation*}
        f(\diamond,x)=\tf(\diamond,x)=x\qquad b(s)=(e,s).
    \end{equation*}
    For $x\in X$ and $\|s\|\leq p$,
    \begin{equation*}
        (1) \quad f((\diamond,x)*b(s)) = f(\diamond,x*s) = x*s = \tf(\diamond,x)*s \qquad
        (2) \quad \|b(s)\| =\|(e,s)\| \leq (0,p),
    \end{equation*}
    satisfying the conditions.
\end{proof}

Finally, we show we have a swap map, a morphism $s:X_1\otimes X_2\to X_2\otimes X_1$.
\begin{proof}
    Let 
    \begin{equation*}
        f(x_1,x_2)=\tf(x_1,x_2)=(x_2,x_1) \qquad b(s_2,s_1)=(s_1,s_2).
    \end{equation*}
    For $(x_1,x_2)\in X_1\times X_2$ and $\|s_2\|\leq p_2$, $\|s_1\|\leq p_1$,
    \begin{align*}
        (1) \quad f&((x_1,x_2)*b(s_2,s_1)) = f(x_1*s_1,x_2*s_2) & (2) \quad \|b(s_2,s_1)\| &\leq (p_1,p_2) \\
        &= (x_2*s_2,x_1*s_1) =(x_2,x_1)*(s_2,s_1) \\
        &=\tf(x_1,x_2)*(s_2,s_1) 
    \end{align*}
    satisfying the conditions.
\end{proof}

\addSubMul*
\begin{proof}
    For $\add$, let 
    \begin{equation*}
        f(x,y)=x+y \qquad \tf(x,y)=(x+y)e^\delta \qquad b(s)=s+\delta.
    \end{equation*}
    For $x,y\in\F$ and $|s|\leq p$,
    \begin{align*}
        (1)\quad f((x,y)*(s+\delta)) &= f(xe^{s+\delta},ye^{s+\delta})=xe^{s+\delta}+ye^{s+\delta} & (2)\quad |s+\delta|&\leq p+\e. \\
        &= (x+y)e^{\delta}e^s = \tf(x,y)*s
    \end{align*}

    For $\mul$, let
    \begin{equation*}
        f(x,y)=xy \qquad \tf(x,y)=xye^\delta \qquad b(s)=(s+\delta)/2.
    \end{equation*}
    For $x,y\in\F$ and $|s|\leq p$,
    \begin{align*}
        (1)\quad f((x,y)*(s+\delta)/2) &= f(xe^{(s+\delta)/2},ye^{(s+\delta)/2}) &(2)\quad |(s+\delta)/2| &\leq (p+\e)/2, \\
        &= xye^{s+\delta} =\tf(x,y)*s
    \end{align*}
    satisfying the conditions.
\end{proof}

\pushAdjust*
\begin{proof}
    Let
    \begin{equation*}
        f=\tf=\id \qquad b(s_1,s_2)=(s_1,i(s_1)^{-1}*j(s_1)*s_2). 
    \end{equation*}
    For $(x_1,x_2)\in X_1\times X_2$ and $\|s_1\|\leq p_1$, $\|s_2\|\leq p_2$, 
    \begin{align*}
        (1)\quad f&((x_1,x_2)*(s_1,i(s_1)^{-1}*j(s_1)*s_2)) & (2) \quad \|&(s_1,i(s_1)^{-1}*j(s_1)*s_2)\| \\
        &=f(x_1*s_1,x_2*i(s_1)*i(s_1)^{-1}*j(s_1)*s_2) &&= (\|s_1\|,\|i(s_1)^{-1}*j(s_1)*s_2\|) \\
        &= f(x_1*s_1,x_2*j(s_1)*s_2) &&\leq (\|s_1\|,\|i(s_1)^{-1}*j(s_1)\|+\|s_2\|) \\
        &= \tf(x_1,x_2)*(s_1,s_2) &&\leq (p_1, p+p_2),
    \end{align*}
    satisfying the conditions.
\end{proof}

\projections*
\begin{proof}
    For $\pi_1$, let
    \begin{equation*}
        f(x_1,x_2)=\tf(x_1,x_2)=x_1 \qquad b(s_1)=(s_1,e_2).
    \end{equation*}
    For $(x_1,x_2)\in X_1\times X_2$ and $\|s_1\|\leq p_1$,
    \begin{align*}
        (1)\quad f((x_1,x_2)*(s_1,e_2)) &= f(x_1*s_1,x_2*i(s_1)*e_2) & (2) \quad \|(s_1,e_2)\| &\leq (p_1,0) \\
        &= x_1*s_1 && \leq (p_1,p_2). 
    \end{align*}
    For $\pi_2$, let 
    \begin{equation*}
        f(x_1,x_2)=\tf(x_1,x_2)=x_2 \qquad b(s_2)=(e_1,s_2).
    \end{equation*}
    For $(x_1,x_2)\in X_1\times X_2$ and $\|s_2\|\leq p_2$,
    \begin{align*}
        (1)\quad f((x_1,x_2)*(e_1,s_2)) &= f(x_1*e_1,x_2*i(e_1)*s_2) & (2) \quad \|(e_1,s_2)\| &\leq (p_1,p_2), \\
        &= f(x_1,x_2*s_2) =x_2*s_2
    \end{align*}
    satisfying the conditions.
\end{proof}

\distributor*
\begin{proof}
    Let 
    \begin{align*}
        f((x_1,y_1),(x_2,y_2))&=\tf((x_1,y_1),(x_2,y_2))=((x_1,x_2),(y_1,y_2)) \\
        b((s_1,s_2),(t_1,t_2))&=((s_1,t_1),(s_2,t_2)). 
    \end{align*}
    For $((x_1,y_1),(x_2,y_2))\in (X_1\times Y_1)\times (X_2\times Y_2)$ and $\|s_1\|\leq p_1$, $\|s_2\|\leq p_2$, $\|t_1\|\leq q_1$, $\|t_2\|\leq q_2$,
    \begin{align*}
        (1)\quad f&(((x_1,y_1),(x_2,y_2))*((s_1,t_1),(s_2,t_2))) & (2) \quad \|&((s_1,t_1),(s_2,t_2))\|\\
        &= f((x_1,y_1)*(s_1,t_1),(x_2,y_2)*(s_2,t_2)) &&\leq ((p_1,q_1),(p_2,q_2)), \\
        &= f((x_1*s_1,y_1*i(s_1)*t_1),(x_2*s_2,y_2*j(s_2)*t_2)) \\
        &= ((x_1*s_1,x_2*s_2),(y_1*i(s_1)*t_1,y_2*j(s_2)*t_2)) \\
        &= ((x_1,x_2)*(s_1,s_2),(y_1,y_2)*(i,j)(s_1,s_2)*(t_1,t_2)) \\
        &= \tf((x_1,y_1),(x_2,y_2))*((s_1,s_2),(t_1,t_2))
    \end{align*}
    satisfying the conditions.
\end{proof}

\begin{lemma}[Parallel Composition for Push Product]\label{lem:push_parallel}
    Suppose 
    \begin{align*}
        f_1:(X_1,S_1,p_1)\to (Y_1,T_1,q_1) \qquad f_2:(X_2,S_2,p_2)\to (Y_2,T_2,q_2)
    \end{align*}
    and $i:S_1\to S_2$, $j:T_1\to T_2$ are homomorphisms. Further suppose that for all $(x_1,x_2)\in X_1\times X_2$, $(t_1,t_2)\in T_1\times T_2$ satisfying $\|(t_1,t_2)\|\leq (q_1,q_2)$, 
    \begin{equation*}
        f_2(x_2* b_2(t_2)* i(b_1(t_1)))=f_2(x_2* b_2(t_2))* j(t_1).
    \end{equation*}
    Then we can define a morphism $f_1\star f_2:X_1\star_i X_2\to Y_1\star_j Y_2$ where $f_1$ and $f_2$ are applied in parallel.
\end{lemma}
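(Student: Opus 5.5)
The plan is to take the parallel composition exactly as in the proof that $\otimes$ is a bifunctor (\zcref{lem:sym_mon}) and check the two morphism conditions, now with the push-product actions. Concretely, define
\begin{equation*}
    f_1\star f_2=\bigl((f_1,f_2),(\tf_1,\tf_2),(b_1,b_2)\bigr),
\end{equation*}
so the backward map sends $(t_1,t_2)\in T_1\times T_2$ to $(b_1(t_1),b_2(t_2))\in S_1\times S_2$. The bound condition (2) is immediate and identical to the tensor case: for $\|(t_1,t_2)\|\leq(q_1,q_2)$, the norm on $S_1\times S_2$ is componentwise, so $\|(b_1(t_1),b_2(t_2))\|\leq(p_1,p_2)$ by condition (2) for $f_1$ and $f_2$.

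The substance is condition (1). Take $(x_1,x_2)\in\dom(\tf_1)\times\dom(\tf_2)$ and $(t_1,t_2)$ with $\|(t_1,t_2)\|\leq(q_1,q_2)$. Unfolding the source action of $X_1\star_i X_2$ gives
\begin{equation*}
    (f_1,f_2)\bigl((x_1,x_2)*(b_1(t_1),b_2(t_2))\bigr)=\bigl(f_1(x_1*b_1(t_1)),\ f_2(x_2*i(b_1(t_1))*b_2(t_2))\bigr).
\end{equation*}
The first component equals $\tf_1(x_1)*t_1$ by condition (1) for $f_1$. For the second, I will use that $S_2$ is abelian and the action is a right action to reorder $x_2*i(b_1(t_1))*b_2(t_2)=x_2*b_2(t_2)*i(b_1(t_1))$, then apply the stated hypothesis to rewrite it as $f_2(x_2*b_2(t_2))*j(t_1)$, and then condition (1) for $f_2$ to get $\tf_2(x_2)*t_2*j(t_1)=\tf_2(x_2)*j(t_1)*t_2$, again by commutativity of $T_2$. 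Comparing with the target action of $Y_1\star_j Y_2$, namely $(\tf_1(x_1),\tf_2(x_2))*(t_1,t_2)=(\tf_1(x_1)*t_1,\tf_2(x_2)*j(t_1)*t_2)$, closes condition (1).

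The only delicate point, and where I expect to spend care, is the second component: one must make sure the hypothesis is invoked at a legitimate point. The hypothesis is quantified over all $(x_1,x_2)$ and all bounded $(t_1,t_2)$, and our $(t_1,t_2)$ is exactly such a pair, so it applies directly; the reordering of shifts relies on the shift groups being abelian (part of the definition of an object), which is what lets $i(b_1(t_1))$ be moved past $b_2(t_2)$ and $j(t_1)$ past $t_2$. No further estimates are needed.
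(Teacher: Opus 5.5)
Your proof is correct and follows the same route as the paper's. You use the backward map $(b_1,b_2)$, the componentwise bound for condition (2), and the chain through condition (1) for $f_1$, the hypothesis, and condition (1) for $f_2$; the only difference is that you state explicitly the commutativity of $S_2$ and $T_2$ used to reorder shifts, which the paper leaves implicit.
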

\begin{proof}
    For $(x_1,x_2)\in \dom(\tf_1\times\tf_2)$ and $\|t_1\|\leq q_1$, $\|t_2\|\leq q_2$,
    \begin{align*}
        (1) \quad f&((x_1,x_2)* b(t_1,t_2))
        & (2) \quad \|&b(t_1,t_2)\|\\
        &= f((x_1,x_2)* (b_1(t_1),b_2(t_2))) &&=\|(b_1(t_1),b_2(t_2))\| \\
        &= f(x_1* b_1(t_1), x_2*i(b_1(t_1))*b_2(t_2)) &&= (\|b_1(t_1)\|,\|b_2(t_2)\|) \\
        &= (f_1(x_1*b_1(t_1)),f_2(x_2* i(b_1(t_1))* b_2(t_2))) &&\leq (p_1,p_2), \\
        &= (\tf_1(x_1)* t_1, f_2(x_2* b_2(t_2))* j(t_1))  \\
        &= (\tf_1(x_1)* t_1,\tf_2(x_2)* t_2* j(t_1)) \\
        &= \tf(x_1,x_2)* (t_1,t_2)
    \end{align*}
    satisfying the conditions.
\end{proof}

\shareLens*
\begin{proof}
    Let 
    \begin{equation*}
        f=\tf=\id \qquad b(s)=(s,i(s)^{-1}*s).
    \end{equation*}
    For $(x_1,x_2)\in X_1\times X_2$ and $\|s\|\leq p_1$,
    \begin{align*}
        (1)\quad f&((x_1,x_2)*(s,i(s)^{-1}*s)) &(2) \quad \|(s,i(s)^{-1}*s)\| &\leq (p_1,p_2), \\
        &= f(x_1*s,x_2*i(s)*i(s)^{-1}*s) \\
        &= f(x_1*s,x_2*s) = \tf(x_1,x_2)*s
    \end{align*}
    satisfying the conditions.
\end{proof}

\begin{lemma}\label{lem:div_lens}
    We have a division lens
    \begin{equation*}
        (\R,\R^+,p+\e)\otimes (\R,\R^+,0)\to (\R,\R^+,p).
    \end{equation*}
\end{lemma}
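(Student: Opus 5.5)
We construct the lens explicitly, in the same style as the $\add$, $\sub$ and $\mul$ lenses, with all rounding error pushed onto the numerator. The source is the tensor product $(\R,\R^+,p+\e)\otimes(\R,\R^+,0)$. Its shift group is $\R\times\R$ with the componentwise action $(x,y)*(s,t)=(xe^s,ye^t)$. The target is $(\R,\R^+,p)$. Following the footnote in the Cholesky example, we take
\begin{equation*}
    f(x,y)=\begin{cases} x/y & y\neq 0\\ 0 & y=0\end{cases}
    \qquad
    \tf(x,y)=f(x,y)e^{\delta}
    \qquad
    b(s)=(s+\delta,\,0),
\end{equation*}
where $|\delta|\leq\e$ is the rounding error of $\divi~x~y$ from \zcref{def:olver_model}. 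As in the other primitive lenses, $\delta$ depends on the input, and the backward map is read with this dependence implicit.

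Next we check the two conditions of \zcref{def:shel_morph}. Fix floating-point $(x,y)$ and $s\in\R$ with $|s|\leq p$, and split into two cases.

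If $y\neq 0$, then
\begin{equation*}
    f\big((x,y)*(s+\delta,0)\big)=f(xe^{s+\delta},y)=\frac{x}{y}e^{\delta}e^{s}=\tf(x,y)*s .
\end{equation*}

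If $y=0$, the shift on $y$ is $0$, so the denominator stays exactly $0$. Both sides then equal $0$, because $0*s=0\cdot e^s=0$. This is precisely why all error must go onto the numerator: shifting the denominator cannot move it away from zero in a consistent way.

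For the norm condition, $\|b(s)\|=(|s+\delta|,0)\leq(p+\e,0)$ by the triangle inequality, which matches the bounds $(p+\e,0)$ of the source object.

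The only delicate point is the $y=0$ convention together with Olver's model. Division by zero is outside the model, so we must either restrict $\dom(\tf)$ to $y\neq0$ or define $\tf(x,0)=0$. Either choice makes condition (1) hold trivially in that case, since the shift on $y$ is the identity. Everything else is the routine calculation above.
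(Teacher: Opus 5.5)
Your proof is correct and matches the paper's: the same maps $f$, $\tf$ (which is $0$ when $y=0$), and $b(s)=(s+\delta,0)$, the same case split on $y=0$, and the same triangle-inequality bound $(|s+\delta|,0)\leq(p+\e,0)$. One aside is inaccurate, though it does not affect the proof: the $y=0$ case does not force the error onto the numerator, because $0\cdot e^{t}=0$ for every shift $t$. Putting the error on the numerator is a design choice that gives the denominator the bound $0$ used in the Cholesky example.
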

\begin{proof}
    Define 
    \begin{align*}
        f(x,y)=0\text{ if }y=0\text{ else }x/y \qquad 
        \tf(x,y)=0 \text{ if }y=0\text{ else }xe^\delta/y \qquad
        b(s)&=(s+\delta,0).
    \end{align*}
    For $x,y\in\F$ and $|s|\leq p$, if $y=0$, then 
    \begin{align*}
        f((x,y)*(s+\delta,0)) &= 0=\tf(x,y)*s.
    \end{align*}
    Otherwise,
    \begin{align*}
        (1) \quad f&((x,y)*(s+\delta,0)) = f(xe^{s+\delta},y) &(2) \quad (|s+\delta|,0) &\leq (p+\e,0), \\
        &= (xe^{\delta}/y)e^s =\tf(x,y)*s
    \end{align*}
    satisfying the conditions.
\end{proof}

\begin{lemma}\label{lem:add_div_lens}
    We have an addition/division lens
    \begin{equation*}
        (\R\times\R,\R^+,p_1)\star_i(\R,\R^+,p_2+2\e)\to (\R,\R^+,p_2).
    \end{equation*}
\end{lemma}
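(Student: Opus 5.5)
We construct the lens directly from \zcref{def:shel_morph}, in the same style as \zcref{lem:div_lens} and \zcref{ex:dmul_lens}. The source object is $(\R\times\R,\R^+,p_1)\star_i(\R,\R^+,p_2+2\e)$. A value is $((w_1,w_2),n)$: $(w_1,w_2)$ forms the share-product base (the weights, perturbed identically) and $n$ is the dependent numerator. We take $i(s)=s$, which is the homomorphism $\star_1$ produced by the preceding steps of the weighted average example. A source shift $(s_1,s_2)$ therefore acts by
\begin{equation*}
    ((w_1,w_2),n)*(s_1,s_2)=((w_1e^{s_1},w_2e^{s_1}),\,n e^{i(s_1)+s_2}).
\end{equation*}
The maps are $f((w_1,w_2),n)=n/(w_1+w_2)$ (set to $0$ when $w_1+w_2=0$, as in \zcref{lem:div_lens}) and
\begin{equation*}
    \tf((w_1,w_2),n)=\divi~n~(\add~w_1~w_2)=\frac{n}{(w_1+w_2)e^{\delta_1}}e^{\delta_2},
\end{equation*}
with $|\delta_1|,|\delta_2|\le\e$ by \zcref{def:olver_model}. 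The idea is to push both rounding errors onto the numerator and leave the weights unperturbed, so we set $b(t)=(e,\,t+\delta_2-\delta_1)$, that is, $b(t)=(0,t+\delta_2-\delta_1)$ additively.

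Next we check condition (1). Take $|t|\le p_2$. Because the first component of $b(t)$ is the identity shift, $i$ applied to it is also the identity. Hence
\begin{equation*}
    ((w_1,w_2),n)*b(t)=((w_1,w_2),\,ne^{t+\delta_2-\delta_1}),
\end{equation*}
and applying $f$ gives $\frac{n}{w_1+w_2}e^{\delta_2-\delta_1}e^{t}=\tf(\cdot)*t$, as required. When $w_1+w_2=0$ both sides are $0$. This is the same convention as \zcref{lem:div_lens}, so that case is trivial.

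For condition (2), $\|b(t)\|=(0,|t+\delta_2-\delta_1|)\le(0,p_2+2\e)\le(p_1,p_2+2\e)$ by the triangle inequality. If the statement is meant for a general homomorphism $i$, nothing changes: the first shift component is always $e$, so $i$ never enters the computation, and the argument is the same.

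The main subtlety is handling the denominator correctly. The computed sum $\add~w_1~w_2$ carries the factor $e^{\delta_1}$ in the denominator, so it contributes the inverse shift $-\delta_1$ to the numerator. This works only because shifts form a group; it is the same trick used for $\share$ in the quadratic example. Apart from that, and the zero-denominator convention, the verification is routine.
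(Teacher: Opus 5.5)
Your proposal is correct and follows essentially the same route as the paper: the same maps $f$ and $\tf$ (both set to $0$ when $w_1+w_2=0$), the same backward map $b(t)=(0,\,t+\delta_2-\delta_1)$ pushing both rounding errors onto the numerator, the same zero-denominator case split, and the same bound $(0,|t+\delta_2-\delta_1|)\le(p_1,p_2+2\e)$. Your remark that the homomorphism $i$ never matters because $i(e)=e$ is what the paper uses implicitly when it writes the shifted input as $(x,y,ze^{s+\delta_2-\delta_1})$.
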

\begin{proof}
    Define
    \begin{gather*}
        f(x,y,z)=0\text{ if }x+y=0\text{ else }z/(x+y) \qquad
        \tf(x,y,z)=0 \text{ if }x+y=0\text{ else }ze^{\delta_2-\delta_1}/(x+y) \\ 
        b(s) = (0,s+\delta_2-\delta_1).
    \end{gather*}
    For $x,y,z\in\F$ and $|s|\leq p_2$, if $x+y=0$, then 
    \begin{align*}
        f((x,y,z)*(0,s+\delta_2-\delta_1)) &= 0=\tf(x,y,z)*s.
    \end{align*}
    Otherwise,
    \begin{align*}
        (1) \quad f&((x,y,z)*(0,s+\delta_2-\delta_1)) = f(x,y,ze^{s+\delta_2-\delta_1}) &(2) \quad (0,|s+\delta_2-\delta_1|) &\leq (p_1,p_2+2\e), \\
        &= (ze^{\delta_2-\delta_1}/(x+y))e^s =\tf(x,y,z)*s
    \end{align*}
    satisfying the conditions.
\end{proof}

%% file: sections/appendix/bench.tex
Here, we detail the benchmark programs evaluated in \zcref{tab:second}. For each example, we show one sample backward error bound synthesized by our tool.
\begin{enumerate}
    \item This program is 
    \begin{lstlisting}
        (Add x (Add (Mul a x) (Mul (Mul b x) x)))
    \end{lstlisting}
    implementing the function 
    \begin{equation*}
        f(a, b,x)=x+ax+bx^2.
    \end{equation*}
    The backward error bound on $a$ is $2\e$, on $b$ is $4\e$, and on $x$ is $\e$.
    \item This program is 
    \begin{lstlisting}
        (Add a (Sqrt (Mul a b)))
    \end{lstlisting}
    implementing the function 
    \begin{equation*}
        f(a, b)=a+\sqrt{ab}.
    \end{equation*}
    The backward error bound on $a$ is $\e$ and on $b$ is $4\e$.
    \item This program is 
    \begin{lstlisting}
        (Mul (Add a b) (Add b a))
    \end{lstlisting}
    implementing the function
    \begin{equation*}
        f(a, b)=(a+b)^2.
    \end{equation*}
    The backward error bound on $a$ is $3\e/2$ and on $b$ is $3\e/2$.
    \item This program is 
    \begin{lstlisting}
        (Mul (Add a (Mul a b)) (Add c (Mul c d)))
    \end{lstlisting}
    implementing the function
    \begin{equation*}
        f(a, b,c,d)=(a+ab)(c+cd).
    \end{equation*}
    The backward error bound on $a$ is $3\e/2$, on $b$ is $\e$, on $c$ is $3\e/2$, and on $d$ is $\e$.
    \item This program is 
    \begin{lstlisting}
        (Add a (Mul a (Sqrt b)))
    \end{lstlisting}
    implementing the function
    \begin{equation*}
        f(a,b)=a+a\sqrt{b}.
    \end{equation*}
    The backward error bound on $a$ is $\e$ and on $b$ is $4\e$.
    \item This program is 
    \begin{lstlisting}
        (Sqrt (Add (Mul a x) (Sqrt b)))
    \end{lstlisting}
    implementing the function
    \begin{equation*}
        f(a,b,x)=\sqrt{ax+\sqrt{b}}.
    \end{equation*}
    The backward error bound on $a$ is $0$, on $b$ is $8\e$, and on $x$ is $4\e.$
    \item This program is 
    \begin{lstlisting}
        (Mul (Add a (Sqrt b)) (Add a (Sqrt b)))
    \end{lstlisting}
    implementing the function
    \begin{equation*}
        f(a,b)=(a + \sqrt{b})^2.
    \end{equation*}
    The backward error bound on $a$ is $3\e/2$ and on $b$ is $5\e$.
    \item This program is 
    \begin{lstlisting}
        (Mul (Sqrt a) (Sqrt b))
    \end{lstlisting}
    implementing the function
    \begin{equation*}
        f(a,b)=\sqrt{a}\sqrt{b}.
    \end{equation*}
    The backward error bound on $a$ is $3\e$ and on $b$ is $3\e$.
\end{enumerate}

%% file: header.bib
@STRING{springer = {Springer-Verlag} }

@STRING{eptcs    = "Electronic Proceedings in Theoretical Computer Science" }

@STRING{pacmpl   = "Proceedings of the {ACM} on Programming Languages" }

@STRING{toplas   = "ACM Transactions on Programming Languages and Systems" }

@STRING{ieee     = "IEEE" }

@STRING{popl12      = popl # ", Philadelphia, Pennsylvania" }

@STRING{cpp24       = cpp # ", London, England" }

@STRING{oopsla15    = oopsla # ", Pittsburgh, Pennsylvania" }

@STRING{pldi        = "{ACM SIGPLAN Conference on Programming Language Design
                      and Implementation (PLDI)}" }

@STRING{pldi15      = pldi # ", Portland, Oregon" }

@STRING{fm24        = fm # ", Milan, Italy" }

@STRING{sas06       = sas # ", Seoul, Korea" }

@preamble{"\newcommand{\SortNoop}[1]{}"}


%% file: references.bib
@article{bean,
    author = {Kellison, Ariel E. and Zielinski, Laura and Bindel, David and Hsu, Justin},
    title = {Bean: A Language for Backward Error Analysis},
    year = {2025},
    volume = {9},
    number = {PLDI},
    doi = {10.1145/3729324},
    journal = pacmpl,
    articleno = {221},
    numpages = {25},
    pages = {1838--1862},
}

@article{olver,
    author = {Olver, F. W. J.},
    title = {A New Approach to Error Arithmetic},
    journal = {SIAM Journal on Numerical Analysis},
    volume = {15},
    number = {2},
    pages = {368--393},
    year = {1978},
    doi = {10.1137/0715024},
}

@inproceedings{mech_olver,
    author = {Fan, Max and Kellison, Ariel E. and Pollard, Samuel D.},
    year = {2025},
    title = {Mechanizing {Olver's} Error Arithmetic},
    booktitle = {International Workshop on Verification of Scientific Software, Hamilton, Ontario},
    address = {Hamilton, Ontario, Canada},
    series = eptcs,
    volume = {432},
    publisher = {Open Publishing Association},
    pages = {37--47},
    doi = {10.4204/eptcs.432.6},
}

@article{egglog,
    author = {Zhang, Yihong and Wang, Yisu Remy and Flatt, Oliver and Cao, David and Zucker, Philip and Rosenthal, Eli and Tatlock, Zachary and Willsey, Max},
    title = {Better Together: Unifying {Datalog} and Equality Saturation},
    year = {2023},
    volume = {7},
    number = {PLDI},
    doi = {10.1145/3591239},
    journal = pacmpl,
    articleno = {125},
    numpages = {25},
    pages = {468--492},
}

@book{higham,
    author = {Higham, Nicholas J.},
    title = {Accuracy and Stability of Numerical Algorithms},
    publisher = {Society for Industrial and Applied Mathematics},
    address = {Philadelphia, PA, USA},
    year = {2002},
    doi = {10.1137/1.9780898718027},
    edition = {Second},
}

@book{corless,
    author = {Corless, Robert M. and Fillion, Nicolas},
    title = {A Graduate Introduction to Numerical Methods},
    publisher = springer,
    address = {New York, NY, USA}, 
    year = {2013},
    doi = {10.1007/978-1-4614-8453-0},
    edition = {First},
}

@article{numfuzz,
    author = {Kellison, Ariel E. and Hsu, Justin},
    title = {Numerical Fuzz: A Type System for Rounding Error Analysis},
    year = {2024},
    volume = {8},
    number = {PLDI},
    doi = {10.1145/3656456},
    journal = pacmpl,
    articleno = {226},
    numpages = {25},
    pages = {1954--1978},
}

@article{gappa,
    author = {de Dinechin, Florent and Lauter, Christoph and Melquiond, Guillaume},
    journal = {IEEE Transactions on Computers}, 
    title = {Certifying the Floating-Point Implementation of an Elementary Function Using {Gappa}}, 
    year = {2011},
    volume = {60},
    number = {2},
    pages = {242--253},
    doi = {10.1109/TC.2010.128},
}

@article{fptaylor,
    author = {Solovyev, Alexey and Baranowski, Marek S. and Briggs, Ian and Jacobsen, Charles and Rakamari\'{c}, Zvonimir and Gopalakrishnan, Ganesh},
    title = {Rigorous Estimation of Floating-Point Round-Off Errors with Symbolic Taylor Expansions},
    year = {2018},
    volume = {41},
    number = {1},
    issn = {0164-0925},
    doi = {10.1145/3230733},
    journal = toplas,
    articleno = {2},
    numpages = {39},
    pages = {1--39},
}

@inproceedings{precisa,
    author = {Titolo, Laura and Moscato, Mariano and Feliu, Marco A. and Masci, Paolo and Mu\~{n}oz, C\'{e}sar A.},
    title = {Rigorous Floating-Point Round-Off Error Analysis in {PRECiSA} 4.0},
    year = {2024},
    isbn = {978-3-031-71177-0},
    publisher = springer,
    address = {Cham, Switzerland},
    doi = {10.1007/978-3-031-71177-0_2},
    booktitle = fm24,
    pages = {20--38},
}

@inproceedings{fu,
    author = {Fu, Zhoulai and Bai, Zhaojun and Su, Zhendong},
    title = {Automated Backward Error Analysis for Numerical Code},
    year = {2015},
    isbn = {9781450336895},
    publisher = {Association for Computing Machinery},
    address = {New York, NY, USA},
    doi = {10.1145/2814270.2814317},
    booktitle = oopsla15,
    pages = {639--654},
}

@phdthesis{rowan,
    author = {Rowan, Thomas Harvey},
    school = {University of Texas at Austin},
    title = {Functional Stability Analysis of Numerical Algorithms},
    year = {1990},
    address = {USA},
}

@article{miller,
    author = {Miller, Webb and Spooner, David},
    title = {Algorithm 532: Software for Roundoff Analysis [Z]},
    year = {1978},
    publisher = {Association for Computing Machinery},
    address = {New York, NY, USA},
    volume = {4},
    number = {4},
    issn = {0098-3500},
    doi = {10.1145/356502.356497},
    journal = {ACM Transactions on Mathematical Software},
    pages = {388--390},
}

@inproceedings{fluctuat,
    author = {Goubault, Eric and Putot, Sylvie},
    title = {Static Analysis of Numerical Algorithms},
    year = {2006},
    isbn = {978-3-540-37758-0},
    publisher = springer,
    address = {Berlin, Heidelberg},
    doi = {10.1007/11823230_3},
    booktitle = sas06,
    pages = {18--34},
}

@inproceedings{vcfloat,
    author = {Appel, Andrew W. and Kellison, Ariel E.},
    title = {{VCFloat2}: Floating-Point Error Analysis in {Coq}},
    year = {2024},
    publisher = {Association for Computing Machinery},
    address = {New York, NY, USA},
    doi = {10.1145/3636501.3636953},
    booktitle = cpp24,
    pages = {14--29},
}

@book{wilkinson,
    author = {Wilkinson, James Hardy},
    title = {Rounding Errors in Algebraic Processes},
    publisher = {Society for Industrial and Applied Mathematics},
    year = {2023},
    doi = {10.1137/1.9781611977523},
    address = {Philadelphia, PA},
    note = {Originally published 1963},
}

@article{higham_linear,
    author = {Higham, Desmond J. and Higham, Nicholas J.},
    title = {Backward Error and Condition of Structured Linear Systems},
    journal = {SIAM Journal on Matrix Analysis and Applications},
    volume = {13},
    number = {1},
    pages = {162--175},
    year = {1992},
    doi = {10.1137/0613014},
}

@article{lenses,
    author = {Foster, J. Nathan and Greenwald, Michael B. and Moore, Jonathan T. and Pierce, Benjamin C. and Schmitt, Alan},
    title = {Combinators for Bidirectional Tree Transformations: A Linguistic Approach to the View-Update Problem},
    year = {2007},
    volume = {29},
    number = {3},
    issn = {0164-0925},
    doi = {10.1145/1232420.1232424},
    journal = toplas,
    pages = {17},
}

@phdthesis{depaiva,
    author = {de Paiva, Valeria Correa Vaz},
    year = {1989},
    title = {The Dialectica categories},
    school = {University of Cambridge, Computer Laboratory},
    doi = {10.48456/tr-213},
    number = {UCAM-CL-TR-213}
}

@inproceedings{herbie,
    author = {Panchekha, Pavel and Sanchez-Stern, Alex and Wilcox, James R. and Tatlock, Zachary},
    title = {Automatically Improving Accuracy for Floating Point Expressions},
    year = {2015},
    isbn = {9781450334686},
    publisher = {Association for Computing Machinery},
    address = {New York, NY, USA},
    doi = {10.1145/2737924.2737959},
    booktitle = pldi15,
    pages = {1--11},
}

@inproceedings{editlens,
    author = {Hofmann, Martin and Pierce, Benjamin and Wagner, Daniel},
    title = {Edit Lenses},
    year = {2012},
    isbn = {9781450310833},
    publisher = {Association for Computing Machinery},
    address = {New York, NY, USA},
    doi = {10.1145/2103656.2103715},
    booktitle = popl12,
    pages = {495--508},
}

@article{probbe,
    author = {Connolly, Michael P. and Higham, Nicholas J. and Mary, Theo},
    title = {Stochastic Rounding and Its Probabilistic Backward Error Analysis},
    journal = {SIAM Journal on Scientific Computing},
    volume = {43},
    number = {1},
    pages = {A566--A585},
    year = {2021},
    doi = {10.1137/20M1334796},
}

@software{eggshel,
    author = {Zielinski, Laura and Hsu, Justin},
    title = {eggshel: A Floating-Point Backward Error Analysis Tool},
    year = 2026,
    publisher = {Zenodo},
    doi = {10.5281/zenodo.19391349},
}

@misc{supplement,
    author = {Zielinski, Laura and Hsu, Justin},
    title = {Supplemental Material for Article `Synthesizing Backward Error Bounds, Backward'},
    year = {2026},
    publisher = {Association for Computing Machinery},
    doi = {10.1145/3808333},
}
